\documentclass[12pt]{article}

\usepackage[T1]{fontenc}
\usepackage[utf8]{inputenc}
\usepackage{textcomp}
\usepackage{lmodern}

\usepackage{amsmath,amssymb,amsthm}

\usepackage{algorithm}
\usepackage{algpseudocode}

\newtheorem{theorem}{Theorem}[section]
\newtheorem{lemma}[theorem]{Lemma}

\newtheorem{remark}[theorem]{Remark}

\usepackage{graphicx}
\usepackage{booktabs}
\usepackage{array}
\usepackage{longtable}
\usepackage{caption}
\usepackage{subcaption}
\usepackage{float}
\usepackage{xcolor}

\usepackage[numbers]{natbib}
\usepackage{url}
\usepackage[colorlinks=true,
linkcolor=blue,
citecolor=blue,
urlcolor=blue]{hyperref}

\usepackage{authblk}

\newcommand{\anon}{1}

\begin{document}
	
	\def\spacingset#1{\renewcommand{\baselinestretch}{#1}\small\normalsize}
	\spacingset{1}
	
\if1\anon

\title{\bf Penalized KLIC Model Selection for the Generalized Method of Moments in Longitudinal Data with Time-Dependent Covariates}

\author[1]{Mahmud Hasan*}
\author[2]{Mathias Nthiani Muia}
\author[3]{Mous-Abou Hamadou}
\author[1]{Niloofar Ramezani}

\affil[1]{Department of Biostatistics, Virginia Commonwealth University, USA}
\affil[2]{Department of Mathematics and Statistics, University of South Alabama, USA}
\affil[3]{Department of Mathematics and Statistics, Quinnipiac University, USA}

\affil[ ]{\footnotesize
	\texttt{hasanm10@vcu.edu} \quad
	\texttt{mnmuia@southalabama.edu} \quad
	\texttt{mhamadou@quinnipiac.edu} \quad
	\texttt{ramezanin2@vcu.edu}
}

\date{}
\maketitle

\fi
	
	\if0\anon
	\begin{center}
		{\LARGE\bf Penalized KLIC Model Selection for the Generalized Method of Moments in Longitudinal Data with Time-Dependent Covariates}
	\end{center}
	\medskip
	\fi
	
	\begin{abstract}
		Model selection plays an important role in longitudinal data analysis, especially when models are estimated using the generalized method of moments (GMM) in the presence of time-dependent covariates. In this setting, the number of valid moment conditions can grow quickly and may lead to over-parameterized models. The Kullback--Leibler Information Criterion (KLIC) has been proposed as a model-selection tool for this framework; however, the original KLIC criterion may favor overly complex models when the number of parameters or valid moment conditions increases. To address this limitation, this study proposes two penalized versions of KLIC that incorporate penalties based on both the number of model parameters and the number of valid moment conditions. The proposed criteria are referred to as the Moment--Parameter Product Penalty KLIC (MPPP--KLIC) and the Logarithmic Penalty KLIC (LP--KLIC). These criteria provide a theoretically motivated mechanism for balancing model fit and model complexity in GMM-based longitudinal models. Through an extensive simulation study involving both binary and continuous response settings, the proposed criteria are shown to improve the ability of KLIC to distinguish among competing models and to reduce the selection of over-parameterized models. The performance of the proposed methods is further illustrated using the Filipino Child Morbidity dataset, a longitudinal study of child health in the Philippines. The results show that the proposed penalized criteria provide stable and interpretable model rankings and consistently identify age as the most important predictor of child morbidity. Overall, the proposed penalized KLIC criteria offer practical and theoretically grounded tools for model selection in GMM-based longitudinal data analysis with time-dependent covariates.
	\end{abstract}
	
	\noindent
	\textit{Keywords:} model selection; penalized KLIC; Kullback--Leibler divergence; generalized method of moments; longitudinal data; time-dependent covariates.

 \section{Introduction}

Longitudinal data arise when repeated measurements are collected on the same subjects over time using the same or comparable instruments. Such data are common in public health, education, economics, and the social sciences because they allow researchers to examine temporal change and within-subject variability. Well-known examples include the National Health and Nutrition Examination Survey (NHANES), which has followed a panel of participants annually since 1999 to record repeated health and nutritional measurements, and the National Education Longitudinal Study (NELS), initiated in 1988, which tracked a cohort of students across multiple survey waves to evaluate educational outcomes and development. Unlike cross-sectional data, longitudinal observations violate the assumption of independent and identically distributed observations because repeated measurements from the same subject are inherently correlated. As a result, statistical methods used for longitudinal data must explicitly account for within-subject correlation in order to obtain valid inference. Two major modeling frameworks are commonly used in this context: marginal models and conditional models. Marginal models provide population-averaged interpretations by modeling the mean response and the within-subject correlation separately \citep{Zeger1986,Zeger1992,Diggle2002}, whereas conditional or subject-specific models explain correlation through individual-level heterogeneity using random effects \citep{Zeger1988,Fitzmaurice2011,Hedeker2006}. 

 In marginal models, the expected response at time $t$ is modeled as a function of covariates observed at the same time point. Marginal models are preferred when population-average interpretations are desired or when predictions rely only on current covariates \citep{Diggle2002,Pepe1994}. One of the most widely used estimation methods for marginal models with correlated responses is the generalized estimating equations (GEE) approach introduced by \citep{Liang1986}. The GEE method uses estimating equations based on a working correlation structure and requires only specification of the first two moments of the response rather than the full likelihood. Although GEE estimators are consistent even when the working correlation structure is misspecified, efficiency may be substantially reduced when time-dependent covariates are present and the independence working correlation is imposed \citep{Pepe1994,Fitzmaurice1995,Diggle2002}.

A common feature of longitudinal studies is the presence of time-dependent covariates (TDCs), whose values change over repeated measurements. Examples include age, body mass index, treatment exposure, and other evolving subject characteristics. The temporal behavior of these covariates has important implications for estimation and inference. Prior research has shown that when time-dependent covariates are present, the use of GEE with non-independent correlation structures may lead to inconsistent estimators unless strong assumptions are satisfied \citep{Pepe1994,Fitzmaurice1995}. To address this issue, \citep{Lai2007} classified time-dependent covariates into several types according to the feedback structure between the response and covariate processes. In particular, Type~I, Type~II, and Type~III covariates generate different sets of valid estimating equations depending on the temporal relationships between covariates and residuals. When covariates are of Type~I or Type~II, additional valid moment conditions exist that are not exploited by independence GEE. 

The Generalized Method of Moments (GMM), originally developed by \citep{Hansen1982}, provides a natural framework for incorporating these additional valid moment conditions. Instead of relying on likelihood or quasi-likelihood functions, GMM estimation is based on moment conditions whose expectations equal zero. By utilizing all valid moment conditions implied by the temporal structure of the covariates, GMM can yield more efficient parameter estimates than GEE when time-dependent covariates of Type~I or Type~II are present, while remaining consistent for Type~III covariates \citep{Lai2007}. As a result, GMM has become an attractive alternative for estimating marginal models in longitudinal studies with complex covariate structures.

Despite these advantages, model selection for GMM-based longitudinal models remains relatively underdeveloped. In likelihood-based modeling, several widely used goodness-of-fit and model-selection criteria exist, including the model deviance \citep{Agresti1990}, Akaike's Information Criterion (AIC) \citep{Akaike1973,Akaike1974}, the corrected AIC (AICc) for small samples \citep{Sugiura1978,Hurvich1989}, and the Bayesian Information Criterion (BIC) \citep{Schwarz1978}. For models estimated using GEE, a modification of AIC known as the quasi-likelihood information criterion (QIC) and its variant QICu have been proposed for model selection \citep{Pan2001,Hilbe2009}. However, these criteria cannot be directly applied to GMM models because they rely on the likelihood or quasi-likelihood functions, whereas GMM estimation is based solely on moment conditions and the minimization of a quadratic form.

To overcome this limitation, an information-theoretic approach based on the Kullback--Leibler divergence \citep{Kullback1951} can be adapted to the moment-based setting. The resulting measure, referred to as the Kullback--Leibler Information Criterion (KLIC), evaluates the discrepancy between the model-implied moment structure and the underlying data-generating mechanism \citep{Kitamura1997,Hansen1982}. Similar to AIC and QIC, the KLIC is a scalar measure used for comparing competing models, and the preferred model is the one with the smallest KLIC value \citep{shane2019model}. However, an important limitation of the standard KLIC is that it does not include an explicit penalty for model complexity. Consequently, when multiple models are compared, the unpenalized KLIC often favors larger models, particularly in settings involving many time-dependent covariates and numerous valid moment conditions \citep{shane2019model}. This behavior resembles that of the ordinary coefficient of determination ($R^2$), which increases automatically as additional predictors are added, regardless of their true contribution to the model.

To address this issue, a penalized version of the KLIC is required in order to balance model fit and model complexity in the GMM framework. Classical model-selection criteria incorporate penalties based on the number of parameters and the sample size. For example, AIC uses a penalty of $2k$ based on the number of parameters \citep{Akaike1973,Akaike1974}, whereas BIC employs a penalty of $k\ln(n)$ that depends on both the number of parameters and the sample size \citep{Schwarz1978}. In the context of GMM estimation with time-dependent covariates, model complexity is influenced not only by the number of regression parameters but also by the number of valid moment conditions generated by the covariate structure. The latter depends directly on the type of time-dependent covariates, since Type~I, Type~II, and Type~III covariates contribute different numbers of valid estimating equations \citep{Lai2007,shane2019model}.

Motivated by these considerations, we develop penalized versions of the Kullback--Leibler Information Criterion for model selection in marginal regression models estimated using GMM with time-dependent covariates. In particular, we propose two penalized criteria: the \emph{Moment Parameter Product Penalty KLIC} (MPPP-KLIC), which is motivated by an AIC-type penalty structure, and the \emph{Logarithmic Penalty KLIC} (LP--KLIC), which is motivated by a BIC-type penalty. Both criteria extend the standard KLIC by incorporating the number of regression parameters together with the number of valid moment conditions implied by the time-dependent covariates. These penalized formulations provide a principled mechanism for balancing model fit and model complexity in the GMM framework. This paper introduces a new information criterion for GMM and are supported by both theoretical justification and empirical evaluation.

We evaluate the performance of the proposed penalized KLIC criteria through an extensive simulation study examining their ability to correctly identify underfit and overfit models under various sample sizes and response types. In addition, we illustrate the practical utility of the proposed methods using a real longitudinal dataset from a study of Filipino children in the Bukidnon region of the Philippines, where repeated measurements were collected to examine the relationship between nutritional status and morbidity. This empirical application demonstrates how the proposed penalized KLIC can be used to select appropriate predictors in marginal regression models with time-dependent covariates. In practice, overfitting in GMM-based longitudinal models can lead to unstable parameter estimates, inflated variance, and poor generalizability, particularly when the number of valid moment conditions grows rapidly with the number of time-dependent covariates. Penalizing KLIC directly addresses this issue by discouraging unnecessary moment conditions and promoting more interpretable model structures.
Together, these contributions provide a unified framework for principled model selection in GMM-based longitudinal analyses.

The remainder of this paper is organized as follows. Section~2 reviews the classification of time-dependent covariates and the corresponding valid moment conditions. Section~3 describes the Generalized Method of Moments for marginal regression models with time-dependent covariates. Section~4 introduces the Kullback--Leibler Information Criterion for GMM-based models. Section~5 presents the proposed penalized criteria, MPPP-KLIC and LP-KLIC, and develops their theoretical motivation. Section~6 reports results from a simulation study evaluating the performance of the proposed methods. Section~7 presents the empirical analysis using the Filipino child health dataset. Section~8 concludes with discussion and directions for future research.

 \section{Time-Dependent Covariates}

Time-dependent covariates (TDCs) are predictors whose values change over time for an individual, unit, or cluster. Any predictor that varies across repeated measurements is regarded as time dependent \citep{Diggle2002,Lai2007,Neuhaus1998}. For example, in a longitudinal study of physical activity, a person's daily step count is a time-dependent covariate because it fluctuates from day to day. Similarly, in environmental and ecological monitoring, seasonal pollen concentration or monthly river water levels serve as time-dependent covariates because of natural temporal variation. In medical follow-up studies, biomarkers such as blood glucose level or viral load also vary over time and thus act as TDCs. These covariates induce temporal dependence among repeated measurements, and such dependence must be properly accounted for when modeling longitudinal data. Ignoring this time-based correlation can lead to substantial losses in efficiency and may increase the risk of biased parameter estimates, thereby compromising the validity of statistical inference \citep{Fitzmaurice1995,Lai2007,Pepe1994}.

Time-dependent covariates can be classified into four distinct types according to the nature of the feedback between the covariate process and the response process \citep{Lai2007,Lalonde2014}. Consider a marginal model of the form
\[
g(\boldsymbol{p}_i)=\mathbf{X}_i^{T}\boldsymbol{\beta},
\]
where $\boldsymbol{p}_i$ is the mean response vector for subject $i$, $g$ is a known link function, $\mathbf{X}_i$ is the covariate vector, and $\boldsymbol{\beta}$ is the parameter vector. Following \citep{Lai2007} and \citep{Lalonde2014}, the TDC types are characterized by the combinations of $s$ and $t$ for which the following moment condition holds:
\begin{equation}
E\!\left[
\frac{\partial p_{is}(\boldsymbol{\beta}_0)}{\partial \beta_j}
\bigl\{y_{it}-p_{it}(\boldsymbol{\beta}_0)\bigr\}
\right]=0,
\label{eq:TDC}
\end{equation}
where $p_{is}$ is the mean response for subject $i$ at time $s$, $\beta_j$ is the $j$th parameter, $y_{it}$ is the observed response for subject $i$ at time $t$, $p_{it}$ is the mean response at time $t$, and $\boldsymbol{\beta}_0$ is the true parameter vector. Here $s,t\in\{1,\ldots,T\}$ denote observation times.

\begin{itemize}
    \item \textbf{Type I:} Equation~\eqref{eq:TDC} holds for all $s$ and $t$. The covariate affects only the current response, with no feedback structure.
    \item \textbf{Type II:} Equation~\eqref{eq:TDC} holds for all $s\ge t$. The current covariate may affect both the current and future responses.
    \item \textbf{Type III:} Equation~\eqref{eq:TDC} holds only for $s=t$. In this setting, feedback is present: the current response may affect the future covariate process, and the current covariate may affect future responses.
    \item \textbf{Type IV:} Equation~\eqref{eq:TDC} holds for all $s\le t$. This case is often viewed as the reverse analogue of Type II, where the current response may be associated with both the current and future covariates.
\end{itemize}

A violation of Equation~\eqref{eq:TDC} occurs when the covariate process at time $s$ is correlated with the residual at time $t$, causing the expectation to be nonzero. To maintain the required zero-expectation condition, one must assume that
\[
E\!\left[y_{it}\mid X_{it}\right]
=
E\!\left[y_{it}\mid X_{i1},X_{i2},\ldots,X_{iT}\right],
\]
for all $s,t\in\{1,\ldots,T\}$. In practice, however, this assumption is often violated when time-varying covariates are present. The temporal dependence introduced by TDCs therefore creates substantial analytical complexity in longitudinal modeling \citep{Diggle2002,Neuhaus1998}.

These feedback structures can make estimation particularly challenging. As a result, issues such as nonconvergence, inefficiency, and biased parameter estimates may arise when the temporal structure of the covariates is not properly incorporated \citep{kleiber2008}. Although GEEs with an independent working correlation structure provide consistent estimators, they may perform poorly when TDCs are present, with efficiency losses of up to 40\% relative to the true correlation structure \citep{Fitzmaurice1995}. On the other hand, adopting a non-independent working correlation structure may lead to inconsistent estimators \citep{Pepe1994}. This creates a fundamental dilemma in practice and motivates the need for an estimation framework that preserves consistency while improving efficiency in the presence of time-dependent covariates. This motivation leads naturally to the Generalized Method of Moments, discussed next, which exploits all valid estimating equations implied by the temporal structure of TDCs and provides a robust framework for obtaining consistent and more efficient parameter estimates \citep{Lai2007}.
Because the number of valid moment conditions depends on the TDC type, model complexity in GMM is driven not only by the number of regression parameters but also by the temporal structure of the covariates. This dual source of complexity makes penalization essential for preventing over-selection of models with many moment conditions.

\section{Generalized Method of Moments}

The Generalized Method of Moments (GMM) provides an alternative to Generalized Estimating Equations (GEE) for estimating parameters in marginal models for correlated data \citep{Hansen1982,hansen2007}. Like GEE, GMM accounts for within-subject correlation arising from repeated measurements. Unlike GEE, however, GMM does not rely on a likelihood or quasi-likelihood formulation; instead, it is based on moment conditions whose expectations are zero at the true parameter value. Since its introduction by \citep{Hansen1982}, GMM has become a fundamental tool in econometric and statistical methodology. Comprehensive discussions of GMM theory and applications are given by \citep{Mat1999}, while its conceptual foundations may be traced back to the minimum $\chi^2$ framework of \citep{neyman1949} and \citep{ferguson1958}; broader inferential connections are discussed by \citep{lindsay2003}. In the context of longitudinal data, \citep{qu2000} incorporated GMM into marginal regression models to improve GEE estimators for Type~I covariates, whereas \citep{lai2007} developed a more complete framework for Type~II time-dependent covariates. GMM is particularly appealing in longitudinal settings with time-dependent covariates because it preserves consistency under minimal assumptions and allows analysts to incorporate all valid estimating equations implied by the temporal structure of the data. Unlike GEE, which requires specification of a working correlation structure, GMM avoids potential misspecification and provides a flexible framework that can yield substantial efficiency gains when additional valid moment conditions are available.

Let $\mathbf{Y}_i$ denote the response vector for subject $i=1,\ldots,N$, let $\mathbf{X}_i$ denote the associated covariate vector, and let $\boldsymbol{\beta}\in\mathbb{R}^p$ be the parameter vector. Suppose there exists a vector of moment functions $\mathbf{g}_i(\mathbf{Y}_i,\mathbf{X}_i;\boldsymbol{\beta})\in\mathbb{R}^m$ such that
\begin{equation}
E\bigl[\mathbf{g}_i(\mathbf{Y}_i,\mathbf{X}_i;\boldsymbol{\beta}_0)\bigr]=\mathbf{0},
\end{equation}\label{eq:moment}
where $\boldsymbol{\beta}_0$ is the true parameter value. Define the sample average of the moment conditions by
\begin{equation}
\mathbf{G}_N(\boldsymbol{\beta})
=
\frac{1}{N}\sum_{i=1}^N
\mathbf{g}_i(\mathbf{Y}_i,\mathbf{X}_i;\boldsymbol{\beta}).
\end{equation}
The GMM estimator is obtained by minimizing the quadratic form
\begin{equation}
Q_N(\boldsymbol{\beta})
=
\mathbf{G}_N(\boldsymbol{\beta})^{\top}
\mathbf{W}_N
\mathbf{G}_N(\boldsymbol{\beta}),
\label{eq:QN}
\end{equation}
where $\mathbf{W}_N$ is a symmetric positive definite weight matrix.

For longitudinal data with time-dependent covariates, \citep{lai2007} proposed moment conditions constructed as products of derivative terms and residuals evaluated at different time points:
\[
g_i
=
\frac{\partial p_{is}}{\partial \beta_j}\,(y_{it}-p_{it}).
\]
Here $p_{is}$ denotes the mean response for subject $i$ at time $s$, $y_{it}$ is the observed response for subject $i$ at time $t$, $p_{it}$ is the corresponding mean response, and $\beta_j$ is the parameter associated with the $j$th covariate. The choice of valid moment conditions depends directly on the type of time-dependent covariates included in the model.

Among the most commonly used GMM procedures are the Continuously Updating GMM (CUGMM) and the two-step GMM (2SGMM). Their respective quadratic forms are
\[
QF_{\mathrm{CUGMM}}
=
\mathbf{G}_N(\boldsymbol{\beta})^{\top}
\mathbf{V}_N^{-1}(\boldsymbol{\beta})
\mathbf{G}_N(\boldsymbol{\beta}),
\]
and
\[
QF_{\mathrm{2SGMM}}
=
\mathbf{G}_N(\boldsymbol{\beta})^{\top}
\widetilde{\mathbf{V}}_N^{-1}(\hat{\boldsymbol{\beta}})
\mathbf{G}_N(\boldsymbol{\beta}).
\]

To obtain a 2SGMM estimator, one first chooses an arbitrary initial weight matrix, often the identity matrix, and uses it to compute an initial inefficient estimator, denoted $\hat{\boldsymbol{\beta}}_{\text{initial}}$. This preliminary estimate is then used to construct an improved weight matrix, which yields the efficient estimator $\hat{\boldsymbol{\beta}}_{\text{efficient}}$ \citep{Hansen1982,Mat1999,nielsen2005}. Because the second-stage weight matrix depends on the initial estimator, the 2SGMM estimator is not unique. In contrast, CUGMM does not depend on an initial weight matrix; instead, the weight matrix is parameter dependent and is updated continuously during optimization \citep{nielsen2005,zivot2015}.

Relative to independent GEE, GMM can provide substantial efficiency gains when time-dependent covariates are present. Simulation results reported by \citep{lai2007} show that GMM estimators are more efficient than independent GEE estimators when Type~I or Type~II TDCs are present, and equally efficient when the covariates are of Type~III. More generally, GMM estimators are as efficient as GEE estimators when the working correlation structure is correctly specified and are asymptotically more efficient when the working correlation structure is misspecified \citep{lai2007}. For these reasons, GMM represents a strong alternative to independent GEE for parameter estimation in longitudinal studies with time-dependent covariates. Although additional valid moment conditions can increase efficiency, they may also introduce numerical instability and sensitivity to misspecification. In practice, the optimal number of moment conditions balances efficiency gains with robustness and computational stability — a balance that penalized KLIC is designed to achieve. However, the flexibility of GMM also introduces practical challenges: as the number of valid moment conditions increases, the estimator may become numerically unstable, sensitive to misspecification, or prone to overfitting. These issues highlight the need for principled model selection tools that balance efficiency with parsimony.

\section{Assessment of Model Fit in GMM-Based Models}

For likelihood-based estimation methods such as maximum likelihood and quasi-likelihood-based methods such as GEE, information criteria including the Akaike Information Criterion (AIC; \citealp{akaike1973,akaike1974}), Bayesian Information Criterion (BIC), and Quasi-likelihood Information Criterion (QIC; \citealp{pan2001}) are widely used to assess model adequacy and compare competing models. These criteria, however, are not directly applicable to the Generalized Method of Moments because GMM estimation is not based on either a full likelihood or a quasi-likelihood function. Thus, a model selection criterion for GMM must account for both the number of parameters and the number of moment conditions — a feature absent in the standard KLIC. Instead, GMM relies entirely on moment conditions, which makes the development of a parallel goodness-of-fit or information-based model selection criterion considerably more challenging.

Recent developments have introduced information-based approaches for GMM, among them the Kullback--Leibler Information Criterion (KLIC), which extends the concept of information divergence to moment-based estimation frameworks. The KLIC provides a promising basis for comparing competing GMM models by quantifying the discrepancy between the assumed model and the underlying data-generating mechanism under the moment-based paradigm. Despite its usefulness, however, the existing KLIC lacks a penalty structure analogous to those in AIC and BIC, where model complexity is explicitly accounted for through the number of estimated parameters and, in the case of BIC, the sample size. This omission limits the ability of the KLIC to balance goodness-of-fit and parsimony, especially in longitudinal settings involving time-dependent covariates, where feedback structures and correlation patterns complicate model specification. In GMM, model complexity arises not only from the number of regression parameters but also from the number of moment conditions, which can grow quickly when multiple time-dependent covariates are present. Without penalization, criteria such as KLIC may systematically favor models with many moment conditions, even when they do not improve model fit.

To address this limitation, we propose a penalized Kullback--Leibler Information Criterion (PKLIC) that extends the standard KLIC by incorporating a complexity penalty tied to key features of the GMM estimation process. Specifically, the penalty is formulated as a function of the number of estimated parameters, denoted by $k$, and the number of moment conditions, denoted by $j$, associated with the time-dependent covariates. The resulting penalized KLIC provides a theoretically grounded, computationally practical, and interpretable criterion for model selection in GMM-based longitudinal models with time-varying predictors.

\subsection{Kullback--Leibler Information Criterion (KLIC)}

The Kullback--Leibler divergence (KLD) measures the information lost when a candidate model is used to approximate the true data-generating process \citep{Csiszar1975,Kullback1951,White1982}. Building on this idea, \citep{Kitamura1997} developed an estimator for dependent data based on minimizing Kullback--Leibler divergence, thereby providing an alternative to the standard optimally weighted GMM approach \citep{Hansen1982}. Their estimator solves
\[
(\hat{\boldsymbol{\beta}},\hat{\boldsymbol{\gamma}})
=
\arg\max_{\boldsymbol{\beta}}\min_{\boldsymbol{\gamma}}
Q_T(\boldsymbol{\beta},\boldsymbol{\gamma})
=
\arg\max_{\boldsymbol{\beta}}\min_{\boldsymbol{\gamma}}
\left\{
\frac{1}{T}\sum_{t=1}^T
\exp\!\left[f_t(\boldsymbol{\gamma},\boldsymbol{\beta})\right]
\right\},
\]
where $T$ is the number of repeated observations, $\boldsymbol{\beta}$ denotes the model parameters, $\boldsymbol{\gamma}$ is a vector of auxiliary parameters, and $Q_T$ is a known function defined in their framework. 

Following this line of work, \citep{shane2019model} used the same underlying idea to construct the Kullback--Leibler Information Criterion (KLIC), replacing full probability distributions by valid moment conditions in order to evaluate model fit. In the present work, we extend this idea further by introducing penalized versions of the KLIC.

The KLIC provides a scalar measure for comparing competing GMM models, with smaller values indicating better fit. Its theoretical form is
\begin{align}
D
=
-\ln\left(
E\left[
\exp\left(
\boldsymbol{\gamma}^{T}\mathbf{g}_i(\boldsymbol{\beta})
\right)
\right]
\right),
\label{KLIC}
\end{align}
where $\mathbf{g}_i(\boldsymbol{\beta})$ is the vector of valid moment conditions for subject $i$ used in the GMM estimation process, and $\boldsymbol{\gamma}$ is a vector of unknown parameters \citep{Altonji1996,Kitamura1997}. The sample version of \eqref{KLIC} is
\begin{align}
\hat{D}
=
-\ln\left(
\frac{1}{N}\sum_{i=1}^{N}
\exp\left(
\hat{\boldsymbol{\gamma}}^{T}\mathbf{g}_i(\hat{\boldsymbol{\beta}})
\right)
\right),
\label{KLIC1}
\end{align}
where $N$ denotes the number of subjects, $\hat{\boldsymbol{\beta}}$ is the 2SGMM estimator, $\mathbf{g}_i(\hat{\boldsymbol{\beta}})$ is the corresponding vector of valid moment conditions, and $\hat{\boldsymbol{\gamma}}$ is estimated by minimizing the KL-based objective.

A computationally convenient form of the KLIC, originally motivated by \citep{Kitamura1997} and adopted in this setting, is
\begin{align}
\mathrm{KLIC}
=
\min_{\boldsymbol{\gamma}}
\left(
\frac{1}{N}\sum_{i=1}^{N}
\exp\left(
\hat{\boldsymbol{\gamma}}^{T}\mathbf{g}_i(\hat{\boldsymbol{\beta}})
\right)
\right).
\label{KLIC3}
\end{align}
In practice, the KLIC is obtained by first estimating $\hat{\boldsymbol{\beta}}$ using 2SGMM \citep{lai2007}, computing the corresponding valid moment conditions $\mathbf{g}_i(\hat{\boldsymbol{\beta}})$ for each subject, and then minimizing Equation~\eqref{KLIC3} with respect to $\boldsymbol{\gamma}$. The resulting value serves as the model-specific information criterion.

A major limitation of the standard KLIC is that it contains no explicit penalty for model complexity. Consequently, it tends to favor overly complex models and may select the full model even when some parameters are unnecessary. In this sense, its behavior resembles that of the coefficient of determination $R^2$, which increases automatically as additional predictors are added, regardless of their actual relevance. Just as the adjusted $R^2$ was introduced to control for this inflation, an adjusted or penalized form of the KLIC is needed to account for both the number of time-dependent covariates and the number of moment conditions in GMM-based longitudinal models. This motivates the central question of the present paper:
\begin{quote}
\textbf{How can a penalty term be formulated for the Kullback--Leibler Information Criterion so that it becomes an effective model selection criterion for GMM models with time-dependent covariates?}
\end{quote}

In GMM, model complexity arises not only from the number of regression parameters but also from the number of moment conditions, which can grow quickly when multiple time-dependent covariates are present. Without penalization, criteria such as KLIC may systematically favor models with many moment conditions, even when they do not improve model fit. To answer this question, we derive penalty structures for the KLIC that parallel the logic of AIC- and BIC-type criteria. Specifically, we propose the \emph{Moment Parameter Product Penalty KLIC} (MPPP--KLIC), with penalty term $2kj$ where $k$ denotes the number of model parameters and $j$ denotes the number of moment conditions used in GMM estimation. This penalty captures the joint contribution of model dimension and moment complexity. We also propose the \emph{Logarithmic Penalty KLIC} (LP--KLIC), with penalty term $2kj\ln(N)$ where $N$ denotes the sample size. This second criterion parallels the logic of BIC by combining model complexity with sample information. Together, these two criteria provide more balanced and practically useful tools for model selection within the GMM framework. 
The MPPP–KLIC can be viewed as an analogue of AIC, penalizing models proportionally to the total number of estimating equations. In contrast, the LP–KLIC resembles BIC by incorporating a logarithmic penalty, which grows more slowly and favors more parsimonious models in larger samples.

\section{Penalty-Adjusted Kullback--Leibler Information Criterion}

\subsection{Set-up and notation}

Let $\{Y_i\}_{i=1}^N$ be independent and identically distributed from an unknown distribution $G$. Let $\mathbf{g}_i(\boldsymbol{\beta})\in\mathbb{R}^j$ denote the vector of valid moment conditions for subject $i$, where $\boldsymbol{\beta}\in\mathbb{R}^k$ is the model parameter vector. Following \citep{Kitamura1997}, define the exponential-tilting KLIC objective by
\begin{align}
\mathrm{KLIC}_N(\boldsymbol{\beta})
=
\min_{\boldsymbol{\gamma}\in\mathbb{R}^j}
\left(
\frac{1}{N}\sum_{i=1}^N
\exp\bigl(\boldsymbol{\gamma}^{\top}\mathbf{g}_i(\boldsymbol{\beta})\bigr)
\right).
\label{KLIC4}
\end{align}
Its logarithmic, deviance-like version is
\begin{align}
\mathcal{L}_N(\boldsymbol{\beta},\boldsymbol{\gamma})
=
-\ln\left(
\frac{1}{N}\sum_{i=1}^N
\exp\bigl(\boldsymbol{\gamma}^{\top}\mathbf{g}_i(\boldsymbol{\beta})\bigr)
\right)\label{Ln}
\end{align}
\begin{align}
\mathrm{KLIC}_N(\boldsymbol{\beta})
=
\exp\bigl(-\mathcal{L}_N(\boldsymbol{\beta},\boldsymbol{\gamma}(\boldsymbol{\beta}))\bigr).
\label{LN}
\end{align}
Here $\boldsymbol{\gamma}(\boldsymbol{\beta})$ denotes the minimizer
\[
\boldsymbol{\gamma}(\boldsymbol{\beta})
=
\arg\min_{\boldsymbol{\gamma}}
\left\{
\frac{1}{N}\sum_{i=1}^N
\exp\!\bigl(\boldsymbol{\gamma}^{\top}\mathbf{g}_i(\boldsymbol{\beta})\bigr)
\right\}.
\]
Let $(\hat{\boldsymbol{\beta}},\hat{\boldsymbol{\gamma}})$ denote the maximizer/minimizer pair of $\mathcal{L}_N$:
\begin{align}
(\hat{\boldsymbol{\beta}},\hat{\boldsymbol{\gamma}})
\in
\arg\max_{\boldsymbol{\beta}}
\arg\min_{\boldsymbol{\gamma}}
\mathcal{L}_N(\boldsymbol{\beta},\boldsymbol{\gamma}).
\end{align}

We distinguish between the in-sample criterion, which is directly computable from the observed data, and the population criterion, which represents the out-of-sample target we would ideally like to approximate. Let $\mathbb{E}_G[\cdot]$ denote expectation under the unknown data-generating distribution $G$. Since \eqref{KLIC4} is computed using the empirical distribution, it represents an in-sample measure of fit. The corresponding population quantity is
\begin{align}
\mathcal{L}^{\star}(\boldsymbol{\beta},\boldsymbol{\gamma})
=
-\ln\left(
\mathbb{E}_G\left[
\exp\{\boldsymbol{\gamma}^{\top}\mathbf{g}(Y,\boldsymbol{\beta})\}
\right]
\right),
\label{LStar}
\end{align}
where $Y\sim G$ and $\mathbf{g}(Y,\boldsymbol{\beta})$ denotes the moment function under the true distribution. This functional measures the expected performance of a given $(\boldsymbol{\beta},\boldsymbol{\gamma})$ pair when applied to a new independent sample from $G$.

Accordingly, we define the target risk criterion as
\[
R(\boldsymbol{\beta},\boldsymbol{\gamma})
=
-2N\,\mathcal{L}^{\star}(\boldsymbol{\beta},\boldsymbol{\gamma}).
\]
The factor $-2N$ places the criterion on the familiar deviance scale used in likelihood-based information criteria such as AIC and BIC, allowing the resulting penalty terms to have comparable magnitude and interpretation. Because $G$ is unknown, $\mathcal{L}^{\star}(\cdot,\cdot)$ cannot be evaluated directly. Instead, we use its sample analogue $\mathcal{L}_N(\cdot,\cdot)$ evaluated at $(\hat{\boldsymbol{\beta}},\hat{\boldsymbol{\gamma}})$. As in Akaike's original argument, 
\[
\mathbb{E}\!\left[-2N\,\mathcal{L}_N(\hat{\boldsymbol{\beta}},\hat{\boldsymbol{\gamma}})\right]
\]
is an optimistically biased estimator of
\[
-2N\,\mathcal{L}^{\star}(\hat{\boldsymbol{\beta}},\hat{\boldsymbol{\gamma}}).
\]
The difference between these two quantities represents the optimism. Estimating this optimism motivates the penalty adjustments developed below and leads naturally to AIC-type and BIC-type penalized versions of the KLIC. We refer to these criteria as the Moment--Parameter Product Penalty KLIC and the Logarithmic Penalty KLIC.\\
\textbf{Regularity.} Assume: (A1) identification of $(\beta_0, \gamma_0 = 0)$, (A2) differentiability of $g(y,\beta)$ in $\beta$ and integrability, (A3) nonsingular block information matrix at $(\beta_0,\gamma_0)$, (A4) $\sqrt{N}(\hat{\theta}-\theta_0) \xrightarrow{d} N(0,\Sigma)$ with $\theta = (\beta^T,\gamma^T)^T$, and (A5) a second-order Taylor expansion of $\mathcal{L}_N$ around $\theta_0$ is valid with remainder $o_p(N^{-1})$. These are the standard ET/GMM regularity conditions.
\subsection{Moment Parameter Product Penalty (MPPP) KLIC}\label{MPPP}
Before presenting our main result, we first establish a key technical lemma that characterizes 
the expected difference between the sample and population ET criterion.  
Lemma~\ref{lemma1} serves as a foundational component in the proof of 
Theorem~\ref{Thm1}, where we derive the Moment--Parameter Product Penalty (MPPP) 
form of the penalized KLIC.

\begin{lemma}\label{lemma1}
Under {\rm(A1)--(A5)}, the expected optimism of the in-sample ET criterion satisfies
\begin{align}
  \mathbb{E}\!\left[-2N\Big\{\mathcal{L}_N(\hat\beta,\hat\gamma)
        - \mathcal{L}^\star(\hat\beta,\hat\gamma)\Big\}\right]
  \;=\; 2\,\mathrm{tr}\!\big(H\,\Sigma\big)\;+\;o(1),
\end{align}
where $\mathcal{L}_N(\beta,\gamma)$ and $\mathcal{L}^\star(\beta,\gamma)$ are the sample and population ET criteria defined in \eqref{LN} and \eqref{LStar}, both evaluated at $(\hat\beta,\hat\gamma)$ in the expression above.  
The matrix
\[
H := \nabla^2_{\theta\theta}\!\big(-\mathcal{L}^\star(\theta)\big)\Big|_{\theta=\theta_0},
\]
is the Hessian of the population criterion at the true parameter $\theta_0 = (\beta_0^\top,\gamma_0^\top)^\top$, and $\Sigma$ denotes the asymptotic covariance matrix of the estimator.
\end{lemma}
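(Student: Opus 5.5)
We will follow Akaike's classical argument: a second-order Taylor expansion of both criteria around the population target $\theta_0=(\beta_0^\top,\gamma_0^\top)^\top$, followed by taking expectations of the resulting quadratic forms using (A4). Write $\ell_N(\theta):=-\mathcal{L}_N(\theta)$ and $\ell^\star(\theta):=-\mathcal{L}^\star(\theta)$. The optimism is then
\[
-2N\{\mathcal{L}_N(\hat\theta)-\mathcal{L}^\star(\hat\theta)\}=2N\{\ell_N(\hat\theta)-\ell^\star(\hat\theta)\}.
\]
We split this quantity as
\[
2N\{\ell_N(\hat\theta)-\ell_N(\theta_0)\}+2N\{\ell_N(\theta_0)-\ell^\star(\theta_0)\}+2N\{\ell^\star(\theta_0)-\ell^\star(\hat\theta)\},
\]
and treat the three pieces separately.

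The middle term has mean zero up to $o(1)$. At $\gamma_0=0$ we have $\exp(\gamma_0^\top g)=1$, so $\ell_N(\theta_0)=\ell^\star(\theta_0)=0$ exactly, and the term vanishes identically.

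For the third term we expand $\ell^\star$ about $\theta_0$. By (A1), $\theta_0$ is the stationary point of the population saddle problem, so $\nabla\ell^\star(\theta_0)=0$. With $H=\nabla^2\ell^\star(\theta_0)$ this gives
\[
\ell^\star(\theta_0)-\ell^\star(\hat\theta)=-\tfrac12(\hat\theta-\theta_0)^\top H(\hat\theta-\theta_0)+o_p(N^{-1}).
\]
For the first term we expand $\ell_N$ about the sample stationary point $\hat\theta$, where $\nabla\ell_N(\hat\theta)=0$ by the first-order conditions of the max--min. Using (A5), together with $\nabla^2\ell_N\to H$ in probability by the law of large numbers under (A2), we get
\[
\ell_N(\hat\theta)-\ell_N(\theta_0)=\tfrac12(\hat\theta-\theta_0)^\top H(\hat\theta-\theta_0)+o_p(N^{-1}).
\]
Summing the three pieces, the optimism equals $2N(\hat\theta-\theta_0)^\top H(\hat\theta-\theta_0)+o_p(1)$. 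By (A4), $\sqrt N(\hat\theta-\theta_0)\to Z\sim N(0,\Sigma)$, and $E[Z^\top HZ]=\mathrm{tr}(H\Sigma)$. This yields the claimed $2\,\mathrm{tr}(H\Sigma)+o(1)$.

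Two steps need care.

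\textbf{Sign bookkeeping in the saddle-point structure.} $\theta$ mixes a maximization in $\beta$ with a minimization in $\gamma$, so $H$ is indefinite. We must check that the expansions above hold as quadratic forms without any definiteness assumption. They do, because the argument uses only stationarity and the Hessian, with (A3) guaranteeing nonsingularity.

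\textbf{Passing from convergence in distribution to convergence of expectations.} This is the main obstacle: (A4) gives only weak convergence, but the lemma is a statement about means. Our first attempt will be to assume, or derive from (A2) and (A5) with a moment bound on $\sup\|g\|$, uniform integrability of $N\|\hat\theta-\theta_0\|^2$ and of the $o_p$ remainders. The usual way to do this is a truncation argument on the event where $\|\hat\theta-\theta_0\|$ is small. If this cannot be obtained from the stated conditions, we will interpret the expectation in the asymptotic sense, that is, as the mean of the limiting distribution, as is customary in AIC derivations, and state this interpretation explicitly.
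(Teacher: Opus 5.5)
\paragraph{Gap: the sign in your first-term expansion, and in the lemma itself.}
Your three-term decomposition is a sound strategy, and it differs usefully from the paper's proof. The paper expands $\mathcal{L}_N$ about $\theta_0$ but drops its linear term $\nabla\mathcal{L}_N(\theta_0)^\top(\hat\theta-\theta_0)$. That term is not zero (its $\gamma$-part is $-N^{-1}\sum_i\mathbf{g}_i(\beta_0)$) and has the same order $N^{-1}$ as the quadratic terms. The paper then replaces the $o_p(1)$ matrix $H(\bar\theta)-H_N(\tilde\theta_N)$ by $H$ and inserts a factor $2$ without derivation. Expanding the sample criterion about $\hat\theta$, where its gradient vanishes, is the right way to capture that linear term.

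The gap is that your first-term expansion has the wrong sign. Since $\nabla\ell_N(\hat\theta)=0$, Taylor's theorem gives $\ell_N(\theta_0)=\ell_N(\hat\theta)+\tfrac12(\hat\theta-\theta_0)^\top\nabla^2\ell_N(\tilde\theta)(\hat\theta-\theta_0)$. Hence $\ell_N(\hat\theta)-\ell_N(\theta_0)=-\tfrac12 Q+o_p(N^{-1})$, where $Q:=(\hat\theta-\theta_0)^\top H(\hat\theta-\theta_0)$. Your $+\tfrac12$ is the textbook AIC expansion written with $J=-\nabla^2\ell$ (positive definite at a maximum), but you substituted $H=+\nabla^2\ell^\star$ for $J$. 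Your two displays, taken as written, also sum to $NQ+0-NQ=0$, not to $2NQ$. So your stated conclusion does not follow even from your own intermediate steps.

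With the sign corrected, both quadratic pieces equal $-\tfrac12 Q$. The optimism is then $-2NQ+o_p(1)$, with limiting mean $-2\,\mathrm{tr}(H\Sigma)$, which is the opposite of the claim. No bookkeeping can fix this, because with $H=\nabla^2(-\mathcal{L}^\star)(\theta_0)$ the displayed identity is false whenever $j>k$. To see this:
\begin{itemize}
\item At $\theta_0$ one computes $H_{\beta\beta}=0$, $H_{\gamma\beta}=D:=\mathbb{E}[\partial\mathbf{g}(Y,\beta_0)/\partial\beta^\top]$ and $H_{\gamma\gamma}=\Omega:=\mathbb{E}[\mathbf{g}(Y,\beta_0)\mathbf{g}(Y,\beta_0)^\top]$.
\item The ET estimator has $\Sigma=\mathrm{diag}(V,P)$, with $V=(D^\top\Omega^{-1}D)^{-1}$ and $P=\Omega^{-1}-\Omega^{-1}DVD^\top\Omega^{-1}$.
\item Therefore $\mathrm{tr}(H\Sigma)=\mathrm{tr}(\Omega P)=j-k\ge 0$.
\item Yet $I_N:=-2N\mathcal{L}_N(\hat\theta)\le 0$ deterministically, because $\hat\gamma$ minimizes $N^{-1}\sum_i\exp(\gamma^\top\mathbf{g}_i(\hat\beta))$, which equals $1$ at $\gamma=0$.
\item In fact $I_N\Rightarrow-\chi^2_{j-k}$ (minus the ET overidentification statistic), and $O_N:=-2N\mathcal{L}^\star(\hat\theta)=NQ+o_p(1)\Rightarrow+\chi^2_{j-k}$ with the same limit variable.
\item Hence $I_N-O_N\Rightarrow-2\chi^2_{j-k}$, whose mean is $-2(j-k)$, not $+2(j-k)$.
\end{itemize}

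What your argument, once corrected, actually proves is $\mathbb{E}[O_N-I_N]=2\,\mathrm{tr}(H\Sigma)+o(1)$. Equivalently, it proves the stated identity with $H$ replaced by $\nabla^2\mathcal{L}^\star(\theta_0)$. That is also the sign Theorem~\ref{Thm1} needs when it adds a positive penalty to $I_N$. Your handling of the middle term is fine, as is your handling of the passage from weak convergence to expectations; the paper simply assumes the latter via $\mathbb{E}[(\hat\theta-\theta_0)(\hat\theta-\theta_0)^\top]=\Sigma/N+o(N^{-1})$.
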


\begin{proof}
Let $I_N := -2N\,\mathcal{L}_N(\hat\theta)$ denote the in-sample criterion and 
$O_N := -2N\,\mathcal{L}^\star(\hat\theta)$ the out-of-sample target, where 
$\hat\theta=(\hat\beta^\top,\hat\gamma^\top)^\top$. We aim to evaluate 
$\mathbb{E}[I_N - O_N]$. A second-order Taylor expansion of $\mathcal{L}_N(\theta)$ and 
$\mathcal{L}^\star(\theta)$ around $\theta_0$ yields
\[
\mathcal{L}_N(\hat\theta)
= \mathcal{L}_N(\theta_0)
+ \tfrac12(\hat\theta-\theta_0)^\top H_N(\tilde\theta_N)(\hat\theta-\theta_0),
\]
\[
\mathcal{L}^\star(\hat\theta)
= \mathcal{L}^\star(\theta_0)
+ \tfrac12(\hat\theta-\theta_0)^\top H(\bar\theta)(\hat\theta-\theta_0),
\]
where $H_N = -\nabla^2\mathcal{L}_N$ and $H = -\nabla^2\mathcal{L}^\star$, and 
$\tilde\theta_N$, $\bar\theta$ lie on the line segment joining $\theta_0$ and $\hat\theta$.  
Under {\rm(A1)--(A5)}, $H_N(\tilde\theta_N)\to_p H$ and 
$\sqrt{N}\,(\hat\theta-\theta_0)\Rightarrow \mathcal{N}(0,\Sigma)$. Subtracting the two expansions and multiplying by $-2N$ gives
\[
I_N - O_N
= (\hat\theta-\theta_0)^\top\!\Big[NH(\bar\theta)-NH_N(\tilde\theta_N)\Big](\hat\theta-\theta_0).
\]
Since $H_N(\tilde\theta_N)\to_p H$, this reduces to
\[
I_N - O_N
= (\hat\theta-\theta_0)^\top (NH)(\hat\theta-\theta_0) + o_p(1).
\]
Taking expectations and using 
\[
\mathbb{E}[(\hat\theta-\theta_0)(\hat\theta-\theta_0)^\top]
= \Sigma/N + o(N^{-1}),
\]
we obtain
\[
\mathbb{E}[I_N - O_N]
= 2\,\mathrm{tr}(H\Sigma) + o(1).
\]
This proves the lemma.
\end{proof}

\begin{theorem}\label{Thm1}
Suppose that the block structure of $H$ and $\Sigma$ satisfies
\[
\mathrm{tr}(H\,\Sigma)
\;=\; c_{\beta}\,k \;+\; c_{\gamma}\,j \;+\; c_{\beta\gamma}\,k\,j \;+\;o(1),
\]
with constants $c_\beta, c_\gamma, c_{\beta\gamma} \in (0,\infty)$ determined by the
$\beta\beta$, $\gamma\gamma$, and $\beta\gamma$ curvature/covariance blocks.

Then, for a suitable penalty multiplier $c_{\mathrm{mppp}} > 0$, the criterion
\begin{align}\label{MPPPKLIC_new}
\mathrm{KLIC}_{\mathrm{MPPP}}
= -2N\,\mathcal{L}_N(\hat\beta,\hat\gamma)
    \;+\; c_{\mathrm{mppp}}\,k\,j,
\end{align}
provides an approximately unbiased estimator of the out-of-sample deviance,
provided that $c_{\mathrm{mppp}}$ is chosen to match the dominant asymptotic
contribution of $\mathrm{tr}(H\,\Sigma)$.
\end{theorem}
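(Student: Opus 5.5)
The argument reduces to Lemma~\ref{lemma1} followed by a bookkeeping step on the penalty. Write $I_N=-2N\,\mathcal{L}_N(\hat\beta,\hat\gamma)$ for the in-sample criterion and $O_N=-2N\,\mathcal{L}^\star(\hat\beta,\hat\gamma)$ for the out-of-sample deviance. Lemma~\ref{lemma1} gives the optimism as a trace:
\[
\mathbb{E}[I_N-O_N]=2\,\mathrm{tr}(H\Sigma)+o(1).
\]
An approximately unbiased estimator of $\mathbb{E}[O_N]$ is therefore $I_N$ corrected by an amount equal to this optimism. Since it enters on the deviance scale, the correction must compensate for the term $2\,\mathrm{tr}(H\Sigma)$, in the same way the $2k$ term arises in Akaike's argument. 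The claim then follows once this correction is shown to be captured, to leading order, by a term proportional to $kj$.

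\textbf{Step 1: decompose the trace.} Partition $\theta=(\beta^\top,\gamma^\top)^\top$ and write $H$ and $\Sigma$ in $2\times 2$ block form with blocks $\beta\beta$ ($k\times k$), $\gamma\gamma$ ($j\times j$) and $\beta\gamma$ ($k\times j$). Then
\[
\mathrm{tr}(H\Sigma)=\mathrm{tr}(H_{\beta\beta}\Sigma_{\beta\beta})+2\,\mathrm{tr}(H_{\beta\gamma}\Sigma_{\gamma\beta})+\mathrm{tr}(H_{\gamma\gamma}\Sigma_{\gamma\gamma}).
\]
The first and last terms are sums of $k$ and $j$ diagonal contributions, and the cross term involves $kj$ entries. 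This is exactly the structure assumed in the theorem, $c_\beta k+c_\gamma j+c_{\beta\gamma}kj+o(1)$, so the hypothesis is taken as given rather than derived.

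\textbf{Step 2: identify the dominant term.} Within the family of candidate models being compared, $k\ge 1$ and $j\ge k$, since the moment conditions overidentify $\beta$. Hence $k\le kj$ and $j\le kj$, and the additive terms are bounded by
\[
c_\beta k+c_\gamma j\le (c_\beta+c_\gamma)\,kj .
\]
The whole trace is therefore bounded above and below by constant multiples of $kj$, and the product term is the one that grows fastest as models enlarge. Choosing
\[
c_{\mathrm{mppp}}=2\bigl(c_{\beta\gamma}+c_\beta/j+c_\gamma/k\bigr),
\]
or, to leading order, $c_{\mathrm{mppp}}=2c_{\beta\gamma}$ when $c_\beta k+c_\gamma j=o(kj)$, makes the penalty $c_{\mathrm{mppp}}kj$ match $2\,\mathrm{tr}(H\Sigma)$.

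\textbf{Step 3: conclude.} With this choice, $\mathbb{E}[\mathrm{KLIC}_{\mathrm{MPPP}}]$ and $\mathbb{E}[O_N]$ differ by $o(1)$, plus lower-order terms in $k$ and $j$ that are absorbed into the ``dominant contribution'' proviso of the statement. This is the sense in which the theorem asserts approximate unbiasedness. The normalisation $c_{\mathrm{mppp}}=2$ recovers the $2kj$ penalty used to define MPPP--KLIC.

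\textbf{Main obstacle.} The delicate step is the sign and scale bookkeeping in Lemma~\ref{lemma1}. Because $\theta$ is a saddle point ($\max$ over $\beta$, $\min$ over $\gamma$), $H$ is indefinite. The $\gamma\gamma$ block could then contribute with the opposite sign, so $\mathrm{tr}(H\Sigma)$ need not be positive. I would handle this by appealing to the positivity of the constants in the theorem's hypothesis, and remark that $c_{\mathrm{mppp}}$ is being matched rather than derived from first principles.
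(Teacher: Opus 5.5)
Your argument is the paper's own proof, step for step. You cite Lemma~\ref{lemma1}, let the $kj$ term dominate $\mathrm{tr}(H\Sigma)$, and calibrate $c_{\mathrm{mppp}}\approx 2c_{\beta\gamma}$; your bound via $j\ge k\ge 1$ is a little more explicit than the paper's ``when both $k$ and $j$ grow''. However, Step~3 does not follow from what you cite, and the paper's proof makes the same slip. Lemma~\ref{lemma1} as stated gives $\mathbb{E}[I_N]=\mathbb{E}[O_N]+2\,\mathrm{tr}(H\Sigma)+o(1)$. The hypothesis with $c_\beta,c_\gamma,c_{\beta\gamma}>0$ forces $\mathrm{tr}(H\Sigma)>0$, so the in-sample deviance \emph{overstates} the target. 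Then
\[
\mathbb{E}[\mathrm{KLIC}_{\mathrm{MPPP}}]-\mathbb{E}[O_N]=2\,\mathrm{tr}(H\Sigma)+c_{\mathrm{mppp}}\,kj+o(1),
\]
which with your choice of $c_{\mathrm{mppp}}$ equals $4\,\mathrm{tr}(H\Sigma)+o(1)$, not $o(1)$. No positive multiplier can cancel a positive bias by \emph{adding} a penalty. Your ``main obstacle'' paragraph points at the right place, but appealing to positivity of the constants cannot rescue the step; positivity is exactly what makes the inconsistency bite.

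What is missing is the correct sign of the optimism, and you should derive it rather than cite the lemma. Expand $\mathcal{L}_N$ about $\hat\theta$, where $\nabla\mathcal{L}_N(\hat\theta)=0$, and $\mathcal{L}^\star$ about $\theta_0$, where $\nabla\mathcal{L}^\star(\theta_0)=0$. Only stationarity is used, so the saddle structure is harmless. With $\delta=\hat\theta-\theta_0$ and $H=-\nabla^2\mathcal{L}^\star(\theta_0)$,
\[
\mathcal{L}_N(\hat\theta)=\mathcal{L}_N(\theta_0)+\tfrac12\,\delta^\top H\delta+o_p(N^{-1}),\qquad
\mathcal{L}^\star(\hat\theta)=\mathcal{L}^\star(\theta_0)-\tfrac12\,\delta^\top H\delta+o_p(N^{-1}),
\]
and $\mathcal{L}_N(\theta_0)=\mathcal{L}^\star(\theta_0)=0$ because $\gamma_0=0$. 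Hence $I_N-O_N=-2N\,\delta^\top H\delta+o_p(1)$ and $\mathbb{E}[O_N-I_N]=2\,\mathrm{tr}(H\Sigma)+o(1)$. The in-sample criterion is optimistic, as the paper's own text before Lemma~\ref{lemma1} says, and with this sign your Steps~2--3 go through.

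Two smaller points. First, your exact multiplier $2(c_{\beta\gamma}+c_\beta/j+c_\gamma/k)$ changes with $(k,j)$, so across candidate models it is just $2\,\mathrm{tr}(H\Sigma)$ relabelled. With a single model-independent $c_{\mathrm{mppp}}=2c_{\beta\gamma}$, the leftover bias $2(c_\beta k+c_\gamma j)$ is not $o(1)$ in $N$; it is only $o(kj)$ as $k\to\infty$ (using $j\ge k$), and ``approximately unbiased'' has to be read in that relative sense. Second, do not offer Step~1 as support for the hypothesis. Counting $kj$ entries says nothing about the size of a trace. Under standard ET asymptotics, $H_{\beta\beta}=0$ (since $\mathcal{L}^\star(\beta,0)\equiv 0$), $H_{\gamma\gamma}=\mathrm{Var}(g)$, and $\hat\beta$, $\hat\gamma$ are asymptotically uncorrelated. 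This gives $\mathrm{tr}(H\Sigma)=j-k$ with no $kj$ term, so the result rests entirely on the assumed block structure.
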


\begin{proof}
By Lemma~\ref{lemma1}, we have
\[
\mathbb{E}\!\left[-2N\Big\{\mathcal{L}_N(\hat\beta,\hat\gamma)
        - \mathcal{L}^\star(\hat\beta,\hat\gamma)\Big\}\right]
\;=\; 2\,\mathrm{tr}(H\,\Sigma) + o(1).
\]
Let
\[
I_N := -2N\,\mathcal{L}_N(\hat\beta,\hat\gamma),
\qquad
O_N := -2N\,\mathcal{L}^\star(\hat\beta,\hat\gamma),
\]
so that
\[
\mathbb{E}[I_N - O_N]
\;=\; 2\,\mathrm{tr}(H\,\Sigma) + o(1).
\]

Under the assumed block structure,
\[
\mathrm{tr}(H\,\Sigma)
\;=\; c_{\beta}\,k + c_{\gamma}\,j + c_{\beta\gamma}\,k\,j + o(1).
\]
When both $k$ and $j$ grow, the interaction term dominates, yielding
\[
\mathrm{tr}(H\,\Sigma)
\;\approx\; c_{\beta\gamma}\,k\,j,
\quad\text{and hence}\quad
\mathbb{E}[I_N - O_N] \;\approx\; 2c_{\beta\gamma}\,k\,j.
\]

Thus, the leading-order optimism is proportional to $k\,j$. Instead of
fixing the penalty constant as $\lambda = 2c_{\beta\gamma}$, we introduce
a calibrated multiplier $c_{\mathrm{mppp}}$ to account for finite-sample
effects.

Define the adjusted criterion
\[
\widehat{O}_N^{(\mathrm{MPPP})}
:= I_N + c_{\mathrm{mppp}}\,k\,j.
\]

If $c_{\mathrm{mppp}} \approx 2c_{\beta\gamma}$, then
\[
\mathbb{E}\!\left[\widehat{O}_N^{(\mathrm{MPPP})}\right]
\approx \mathbb{E}[O_N],
\]
so that $\widehat{O}_N^{(\mathrm{MPPP})}$ is approximately unbiased for
the expected out-of-sample deviance.

Therefore,
\[
\mathrm{KLIC}_{\mathrm{MPPP}}
= -2N\,\mathcal{L}_N(\hat\beta,\hat\gamma)
  + c_{\mathrm{mppp}}\,k\,j
\]
serves as an approximately unbiased estimator of the out-of-sample
deviance. This completes the proof.
\end{proof}

Theorem~\ref{Thm1} shows that the dominant source of optimism in the ET
criterion arises from the interaction between model dimension and moment
complexity, leading naturally to a multiplicative penalty of the form
$k\,j$.

\begin{remark}
\begin{enumerate}
\item The MPPP--KLIC criterion provides an information-based tool for
model selection, where the first term rewards goodness-of-fit and the
penalty term accounts for complexity through both the number of
parameters $k$ and the number of moment conditions $j$.

\item The introduction of the multiplier $c_{\mathrm{mppp}}$ allows
flexibility in finite samples. In practice, this constant controls the
trade-off between model fit and complexity: large values may favor
underfitted models, while small values may lead to overfitting.

\item The multiplicative penalty $k\,j$ (Moment--Parameter Product
Penalty, MPPP) is particularly effective in guarding against moment
overfitting when both the number of parameters and the number of moment
conditions increase.

\item If $\beta$ is obtained upstream (e.g., via 2SGMM) and treated as
fixed, the interaction term disappears and the penalty simplifies to a
function of $j$, yielding an AIC-type criterion based on the effective
number of moments.
\end{enumerate}
\end{remark}
\subsection{Logarithmic penalized KLIC}\label{LogP}
We now develop a Logarithmic Penalized KLIC by analyzing the 
Laplace approximation of the exponentially tilted (ET) evidence. 
To construct a Logarithmic Penalty KLIC, we require control over the local behavior of the ET log-likelihood around its maximizer. In particular, it is necessary to determine the rate at which the estimator converges to the true parameter so that the Laplace approximation may be validly applied. The following Lemma~\ref{lemma2}, establishes this convergence rate and serves as the key technical tool for the development of the Logarithmic Penalized KLIC derived in Theorem~\ref{thm:lp-klic}.

\begin{lemma}\label{lemma2}
Let $\{\hat\theta_N\}_{N\ge1}$ be a sequence of estimators of a fixed parameter 
$\theta_0\in\mathbb{R}^d$, based on $N$ independent observations. 
Suppose that the asymptotic normality property holds:
\[
\sqrt{N}\,(\hat\theta- \theta_0) \;\;\overset{d}{\longrightarrow}\;\; 
\mathcal{N}(0,\Sigma),
\]
with $\Sigma$ a finite positive semidefinite covariance matrix. 
Then the estimation error is of order
\begin{align}
    \|\hat\theta-\theta_0\|\;=\;O_p\!\big(N^{-1/2}\big).
\end{align}

\end{lemma}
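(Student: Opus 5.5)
The plan is to show that convergence in distribution of $Z_N := \sqrt{N}(\hat\theta-\theta_0)$ forces $Z_N$ to be uniformly tight. Dividing by $\sqrt{N}$ then gives the stated rate. We need nothing from (A1)--(A5) beyond the asymptotic normality hypothesis of Lemma~\ref{lemma2} itself.

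First, recall the definition we are aiming for. The statement $\|\hat\theta-\theta_0\| = O_p(N^{-1/2})$ means
\[
\forall \varepsilon>0\ \exists M<\infty,\ N_0 \text{ such that } \Pr\bigl(\sqrt{N}\,\|\hat\theta-\theta_0\|>M\bigr)<\varepsilon \text{ for all } N\ge N_0 .
\]
Equivalently, $\|Z_N\| = O_p(1)$.

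Second, let $Z\sim\mathcal{N}(0,\Sigma)$. Since $\Sigma$ is finite, $\|Z\|$ is an almost surely finite random variable; this also holds when $\Sigma$ is singular, because then $Z$ is simply supported on a subspace. Hence, for any $\varepsilon>0$, we can choose $M$ with $\Pr(\|Z\|\ge M)<\varepsilon/2$. Chebyshev's inequality gives an explicit choice: $\Pr(\|Z\|\ge M)\le \mathrm{tr}(\Sigma)/M^2$, so $M^2 > 2\,\mathrm{tr}(\Sigma)/\varepsilon$ works. We also pick $M$ so that the set $\{z:\|z\|\ge M\}$ is a $Z$-continuity set, meaning $\Pr(\|Z\|=M)=0$. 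This is possible because at most countably many radii carry positive mass.

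Third, we apply the continuous mapping theorem with the norm $\|\cdot\|$, which gives $\|Z_N\| \overset{d}{\to} \|Z\|$. By the portmanteau theorem, $\Pr(\|Z_N\|\ge M)\to\Pr(\|Z\|\ge M)<\varepsilon/2$. So there is $N_0$ with $\Pr(\|Z_N\|\ge M)<\varepsilon$ for all $N\ge N_0$. The finitely many indices $N<N_0$ are handled by enlarging $M$, since each $\|Z_N\|$ is itself a.s. finite. This proves the tightness of $Z_N$, and therefore $\|\hat\theta-\theta_0\|=N^{-1/2}\|Z_N\|=O_p(N^{-1/2})$.

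The only delicate point is choosing $M$ as a continuity point so that the portmanteau theorem applies to the closed tail set. If we prefer to avoid this, we can use the portmanteau inequality for closed sets instead: $\limsup_N \Pr(Z_N\in F)\le\Pr(Z\in F)$ with $F=\{\|z\|\ge M\}$. This suffices directly. Prohorov's theorem, that weakly convergent sequences are tight, gives the same conclusion in a single line and could be cited instead.
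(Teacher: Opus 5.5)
Your proposal is correct and follows essentially the same route as the paper: you deduce tightness of $\sqrt{N}(\hat\theta-\theta_0)$ from its convergence to a Gaussian limit via the portmanteau theorem, then rescale by $N^{-1/2}$. Your choice of $M$ as a continuity point, or your use of the closed set $\{\|z\|\ge M\}$, makes explicit a justification the paper leaves implicit when it applies the portmanteau upper bound directly to the open tail set $\{\|z\|>M_\varepsilon\}$.
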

\begin{proof}
Since
\[
\sqrt{N}\,(\hat\theta-\theta_0)\ \Rightarrow\ Z,
\qquad Z \sim \mathcal{N}(0,\Sigma),
\]
convergence in distribution to a proper random vector implies tightness.  
Fix $\varepsilon \in (0,1)$ and choose $M_\varepsilon>0$ such that
\[
\mathbb{P}\big(\|Z\| > M_\varepsilon\big) < \varepsilon/2.
\]
By the Portmanteau theorem, there exists $N_\varepsilon$ such that for all $N \ge N_\varepsilon$,
\[
\mathbb{P}\!\left(\big\|\sqrt{N}(\hat\theta-\theta_0)\big\| > M_\varepsilon\right)
\le \mathbb{P}\!\left(\|Z\| > M_\varepsilon\right) + \varepsilon/2
< \varepsilon.
\]
Hence $\sqrt{N}(\hat\theta-\theta_0) = O_p(1)$.

Now let $X_N = O_p(1)$ and $a_N > 0$ be deterministic.  
Then $a_N X_N = O_p(a_N)$, since for the same $\varepsilon$ and some $M > 0$ with
$\sup_{N \ge N_\varepsilon}\mathbb{P}(\|X_N\| > M) < \varepsilon$, we have
\[
\mathbb{P}\!\left(\|a_N X_N\| > a_N M\right)
= \mathbb{P}\!\left(\|X_N\| > M\right)
< \varepsilon.
\]

Applying this with $X_N = \sqrt{N}(\hat\theta-\theta_0)$ and $a_N = N^{-1/2}$ yields
\[
\hat\theta-\theta_0 = N^{-1/2} X_N = O_p(N^{-1/2}).
\]

Equivalently, for every $\varepsilon \in (0,1)$ there exist constants 
$C_\varepsilon < \infty$ and $N_\varepsilon$ such that
\[
\mathbb{P}\!\left(\|\hat\theta-\theta_0\| > \frac{C_\varepsilon}{\sqrt{N}}\right) < \varepsilon,
\qquad \text{for all } N \ge N_\varepsilon,
\]
which is precisely the statement that 
$\|\hat\theta-\theta_0\| = O_p(N^{-1/2})$.
\end{proof}

\begin{theorem}[Logarithmic penalized KLIC]\label{thm:lp-klic}
Consider the integrated ET evidence
\[
\mathcal{Z}_N
=\int \exp\!\big\{N\,\mathcal{L}_N(\beta,\gamma)\big\}\,
\pi_\beta(\beta)\,\pi_\gamma(\gamma)\,d\beta\,d\gamma,
\]
where $\pi_\beta$ and $\pi_\gamma$ are prior densities that are positive
and continuous in a neighborhood of the maximizer $(\hat\beta,\hat\gamma)$.
Suppose $\mathcal{Z}_N$ admits a Laplace approximation at
$(\hat\beta,\hat\gamma)$, and let $d_{\mathrm{eff}}$ denote the effective
dimension of the joint $(\beta,\gamma)$ curvature block.

Then
\[
-2\ln \mathcal{Z}_N
=
-2N\,\mathcal{L}_N(\hat\beta,\hat\gamma)
+
d_{\mathrm{eff}}\,\ln N
+
O_p(1).
\]

If the cross-curvature between parameters and moment conditions satisfies
$d_{\mathrm{eff}}\asymp k\,j$, then for a suitable penalty multiplier
$c_{\mathrm{lp}} > 0$, the logarithmic penalized KLIC takes the form
\begin{align}\label{LPKLIC_final}
\mathrm{KLIC}_{\mathrm{LP}}
=
-2N\,\mathcal{L}_N(\hat\beta,\hat\gamma)
+
c_{\mathrm{lp}}\,k\,j\,\ln N,
\end{align}
which provides an approximately unbiased estimator of the out-of-sample
deviance when $c_{\mathrm{lp}}$ is chosen to match the dominant asymptotic
contribution of $d_{\mathrm{eff}}$.
\end{theorem}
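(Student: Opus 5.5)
The plan is to follow the classical Schwarz/BIC derivation, applied to the integrated ET evidence $\mathcal{Z}_N$. Write $\theta=(\beta^\top,\gamma^\top)^\top$, let $\hat\theta$ be the saddle point, and let $d=k+j$ be the ambient dimension.

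\textbf{Step 1: expand the exponent.} Expand $N\mathcal{L}_N(\theta)$ to second order around $\hat\theta$. Since $\hat\theta$ is a stationary point of $\mathcal{L}_N$, the first-order term vanishes. This gives
\[
N\mathcal{L}_N(\theta)=N\mathcal{L}_N(\hat\theta)-\tfrac{N}{2}(\theta-\hat\theta)^\top H_N(\hat\theta)(\theta-\hat\theta)+N\,r_N(\theta).
\]
By (A5), the remainder is negligible on a shrinking neighborhood of radius of order $N^{-1/2}$. Lemma~\ref{lemma2} guarantees that $\hat\theta$ lies within $O_p(N^{-1/2})$ of $\theta_0$. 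Hence $H_N(\hat\theta)\to_p H$, with $H$ nonsingular by (A3), and the neighborhood used for the Laplace approximation is centered correctly.

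\textbf{Step 2: split the integral.} Divide the evidence integral into a local region $\|\theta-\hat\theta\|\le \delta_N$ and its complement. On the local region, continuity and positivity of $\pi_\beta\pi_\gamma$ let us replace the prior by its value at $\hat\theta$, up to a factor $1+o_p(1)$. The remaining Gaussian integral equals
\[
(2\pi)^{d_{\mathrm{eff}}/2}\,N^{-d_{\mathrm{eff}}/2}\,\bigl|H_N^{\mathrm{eff}}\bigr|^{-1/2}.
\]
Here $d_{\mathrm{eff}}$ is the rank of the curvature block along which the integrand is actually concentrated. This equals $d$ when the joint Hessian is nondegenerate. Directions with no curvature contribute $O(1)$ factors through the prior.

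\textbf{Step 3: take logs.} Applying $-2\ln$ gives
\[
-2\ln\mathcal{Z}_N=-2N\mathcal{L}_N(\hat\theta)+d_{\mathrm{eff}}\ln N-d_{\mathrm{eff}}\ln(2\pi)+\ln|H_N^{\mathrm{eff}}|-2\ln\pi(\hat\theta)+o_p(1).
\]
Every term after $d_{\mathrm{eff}}\ln N$ is $O_p(1)$: the determinant converges by (A3), and the prior is bounded away from zero and infinity near $\theta_0$. This yields the first display of the theorem.

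\textbf{Step 4: specialize the penalty.} Under the hypothesis $d_{\mathrm{eff}}\asymp kj$, write $d_{\mathrm{eff}}=c_{\mathrm{lp}}kj+O(1)$ for the matching constant $c_{\mathrm{lp}}$. Substituting gives \eqref{LPKLIC_final}, up to an $O_p(1)$ term that does not depend on the ranking-relevant $\ln N$ scale. The link to the out-of-sample deviance then parallels Theorem~\ref{Thm1}. Lemma~\ref{lemma1} shows that the optimism is $O(1)$, since it equals $2\mathrm{tr}(H\Sigma)$. That is of smaller order than the $\ln N$ penalty, so the criterion is unbiased to the order that governs model comparison, in the same sense as BIC.

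\textbf{Main obstacle: the tail region.} I expect the hardest step to be controlling the integral off the local region. $\mathcal{L}_N$ is a saddle (max in $\beta$, min in $\gamma$), not a concave function. In the $\gamma$ directions $\exp\{N\mathcal{L}_N\}$ grows rather than decays, so the naive Laplace argument fails there. My first attempt would be to interpret ``admits a Laplace approximation'' as an assumption that the tail contribution is $o_p$ of the local contribution. An alternative is to first profile out $\gamma$, replacing $\mathcal{L}_N(\beta,\gamma)$ by $\mathcal{L}_N(\beta,\gamma(\beta))$, which is concave near $\hat\beta$. One would then treat the $\gamma$-integration as contributing the cross-curvature that defines $d_{\mathrm{eff}}$. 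Making $d_{\mathrm{eff}}$ precise, via the rank of the effective joint Hessian block, is where the hypothesis $d_{\mathrm{eff}}\asymp kj$ must be taken as given rather than derived.
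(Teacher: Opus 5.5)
Your route is the paper's route: a second-order expansion of $\mathcal{L}_N$ at $\hat\theta$, a factor $N^{-d_{\mathrm{eff}}/2}$ from Laplace's method, then the posited scaling $d_{\mathrm{eff}}\asymp kj$ and a calibrated $c_{\mathrm{lp}}$. The paper simply asserts the Laplace step ``over the effective subspace,'' so you are right to single out the saddle structure as the crux. However, neither of your remedies works, and the failure is not confined to the tail region. Under (A1)--(A3), at $\theta_0=(\beta_0^\top,\mathbf{0}^\top)^\top$ one computes
\[
\nabla^2_{\theta\theta}\mathcal{L}^\star(\theta_0)=-\begin{pmatrix} 0 & D^\top\\ D & \Omega\end{pmatrix},\qquad D=\mathbb{E}_G\bigl[\partial\mathbf{g}(Y,\beta_0)/\partial\beta^\top\bigr],\quad \Omega=\mathbb{E}_G\bigl[\mathbf{g}(Y,\beta_0)\mathbf{g}(Y,\beta_0)^\top\bigr].
\]
By a Schur-complement inertia count, this matrix has $j$ negative and $k$ positive eigenvalues. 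Hence $\exp\{N\mathcal{L}_N\}$ \emph{increases} away from $\hat\theta$ along a $k$-dimensional subspace. The Gaussian integral in your Step~2 therefore diverges, $|H_N^{\mathrm{eff}}|^{-1/2}$ is not a valid normalizer, and ``the tail is $o_p$ of the local part'' is false rather than merely unverified.

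Contrary to your description, the growth is not in the $\gamma$-directions. $\mathcal{L}_N$ is concave in $\gamma$, being minus a log-sum-exp of linear functions. The growth is along the profile $\beta\mapsto(\beta,\gamma(\beta))$. With the paper's own definition of $\gamma(\beta)$ as the minimizer of the exponential average, $\mathcal{L}_N(\beta,\gamma(\beta))=\max_{\gamma}\mathcal{L}_N(\beta,\gamma)$ is the empirical ET divergence, which is nonnegative and \emph{minimized} at the ET estimator. (The paper's $\arg\max_\beta\arg\min_\gamma\mathcal{L}_N$ swaps the roles; Kitamura--Stutzer take $\arg\max_\beta\min_\gamma$ of the exponential average, i.e.\ $\arg\min_\beta\max_\gamma\mathcal{L}_N$.) So your profiling remedy leaves a trough of $\exp\{N\mathcal{L}_N(\beta,\gamma(\beta))\}$ at $\hat\beta$, not a peak. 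The integral that genuinely admits a Laplace approximation is $\int\exp\{-N\mathcal{L}_N(\beta,\gamma(\beta))\}\pi_\beta(\beta)\,d\beta$. It gives $2N\mathcal{L}_N(\hat\beta,\hat\gamma)+k\ln N+O_p(1)$; note both the sign of the fit term and the dimension $k$.

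Step~4 has two further problems. First, $\theta\in\mathbb{R}^{k+j}$, so any effective dimension of the joint curvature block is at most $k+j$. Under (A3), which you invoke, your own Step~2 gives $d_{\mathrm{eff}}=k+j$. The hypothesis $d_{\mathrm{eff}}\asymp kj$ with a model-independent constant therefore contradicts your Steps~1--3 rather than supplementing them. Nor is the substitution an $O_p(1)$ change. The discrepancy $(d_{\mathrm{eff}}-c_{\mathrm{lp}}kj)\ln N$ is of order $\ln N$, exactly the scale on which BIC-type rankings are decided. It vanishes only if $c_{\mathrm{lp}}=(k+j)/(kj)$, which differs across candidate models.

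Second, the unbiasedness step runs backwards. Laplace's method approximates $-2\ln\mathcal{Z}_N$, not $\mathbb{E}[O_N]$ with $O_N=-2N\mathcal{L}^\star(\hat\theta)$. If, as you read Lemma~\ref{lemma1}, the optimism $\mathbb{E}[I_N-O_N]$ is $O(1)$, then adding $c_{\mathrm{lp}}kj\ln N$ to $I_N=-2N\mathcal{L}_N(\hat\theta)$ introduces a bias of exact order $\ln N$. So $\mathrm{KLIC}_{\mathrm{LP}}$ is \emph{not} approximately unbiased for the out-of-sample deviance; BIC-type penalties are justified by selection consistency, not unbiasedness.

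The paper's proof asserts both the $kj$ coupling and the unbiasedness without argument, so you are not missing a device that the paper supplies. But as written, your argument delivers at most the first display. Even that holds only as a restatement of a hypothesized Laplace approximation that this joint integrand cannot satisfy.
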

\begin{proof}
A second-order Taylor expansion of $\mathcal{L}_N(\theta)$ around
$\hat\theta = (\hat\beta,\hat\gamma)$ yields
\[
\mathcal{L}_N(\theta)
= \mathcal{L}_N(\hat\theta)
-\tfrac{1}{2}(\theta-\hat\theta)^\top \widetilde H_N(\hat\theta)(\theta-\hat\theta)
+ r_N(\theta),
\]
where $\widetilde H_N(\hat\theta)$ is the observed curvature matrix and
$r_N(\theta)=o_p(\|\theta-\hat\theta\|^2)$.

Multiplying by $N$ and exponentiating gives
\[
\exp\!\{N\mathcal{L}_N(\theta)\}
=
\exp\!\{N\mathcal{L}_N(\hat\theta)\}
\exp\!\Big\{-\tfrac{N}{2}(\theta-\hat\theta)^\top
\widetilde H_N(\hat\theta)(\theta-\hat\theta)\Big\}
\{1+o_p(1)\}.
\]

Applying Laplace's method and integrating over the effective subspace,
we obtain
\[
\mathcal{Z}_N
=
\exp\!\{N\mathcal{L}_N(\hat\theta)\}
\cdot N^{-d_{\mathrm{eff}}/2}
\cdot C_N
\cdot \{1+o_p(1)\},
\]
where $C_N$ collects terms that are $O_p(1)$.

Taking logarithms yields
\[
-2\ln \mathcal{Z}_N
=
-2N\,\mathcal{L}_N(\hat\theta)
+
d_{\mathrm{eff}}\,\ln N
+
O_p(1).
\]

In the ET--GMM setting, the curvature matrix has block form
\[
H =
\begin{bmatrix}
H_{\beta\beta} & H_{\beta\gamma} \\
H_{\gamma\beta} & H_{\gamma\gamma}
\end{bmatrix},
\]
where $\beta\in\mathbb{R}^k$ and $\gamma\in\mathbb{R}^j$.
The interaction block $H_{\beta\gamma}$ is non-negligible, and each
parameter is coupled with each moment condition, leading to an effective
dimension satisfying
\[
d_{\mathrm{eff}} \asymp k\,j.
\]

Thus,
\[
-2\ln \mathcal{Z}_N
\approx
-2N\,\mathcal{L}_N(\hat\theta)
+
k\,j\,\ln N.
\]

To allow flexibility in finite samples, we introduce a calibrated
multiplier $c_{\mathrm{lp}}$ and define
\[
\widehat{O}_N^{(\mathrm{LP})}
=
-2N\,\mathcal{L}_N(\hat\theta)
+
c_{\mathrm{lp}}\,k\,j\,\ln N.
\]

If $c_{\mathrm{lp}}$ is chosen to match the leading asymptotic constant,
then
\[
\mathbb{E}\!\left[\widehat{O}_N^{(\mathrm{LP})}\right]
\approx \mathbb{E}[O_N],
\]
so that $\widehat{O}_N^{(\mathrm{LP})}$ is an approximately unbiased
estimator of the out-of-sample deviance. This completes the proof.
\end{proof}
Theorem~\ref{thm:lp-klic} shows that the logarithmic penalty arises
naturally from the effective curvature dimension of the exponentially
tilted likelihood. In contrast to classical settings where complexity
scales with $k$, the ET--GMM framework induces an interaction between
parameters and moment conditions, leading to an effective dimension of
order $k\,j$. This results in a BIC-type penalty that depends not only on model
dimension but also on moment complexity, yielding a stronger and more
conservative penalization compared to MPPP.
\begin{remark}
\begin{enumerate}
\item The LP--KLIC criterion provides a large-sample, information-based
tool for model selection. The logarithmic penalty $k\,j\,\ln N$ reflects
both model dimension and sample size, analogous to BIC.

\item The multiplicative structure $k\,j$ arises from the interaction
between structural parameters and moment conditions. When each parameter
is coupled with each moment equation, the effective dimension grows
proportionally to $k\,j$, leading to stronger penalization than
$(k+j)\ln N$.

\item The inclusion of the multiplier $c_{\mathrm{lp}}$ allows for
calibration in finite samples. In practice, this constant controls the
strength of penalization and can be tuned to balance underfitting and
overfitting.

\item Compared to MPPP, the LP penalty is more conservative and favors
parsimonious models. This makes it particularly suitable when the goal
is consistent model selection in large samples.
\end{enumerate}
\end{remark}

\subsection{Determining Components of Penalized KLIC}

In the penalized Kullback--Leibler Information Criterion framework, the
penalty term depends on two quantities: the number of estimated
parameters, denoted by $k$, and the number of valid moment conditions,
denoted by $j$, used in the Generalized Method of Moments (GMM)
estimation. The quantity $k$ represents the number of regression
parameters estimated in a candidate model, including the intercept and
coefficients associated with time--dependent covariates. The quantity
$j$ denotes the number of valid moment conditions generated from the
time--dependent covariates (TDCs). The total number of moment
conditions depends on the number and type of time--dependent covariates
included in the model.

Let $T$ denote the number of repeated measurements for each subject.
Following \citep{lai2007}, functions take the form
\begin{equation}
g_{i,j,s,t}
=
\left(\frac{\partial \mu_{is}}{\partial \beta_j}\right)
\left(y_{it}-\mu_{it}\right),
\end{equation}
where $\mu_{is}$ represents the mean response for subject $i$ at time
$s$, $\beta_j$ denotes the $j$th regression coefficient, and $(s,t)$
indexes pairs of observation times. A pair $(s,t)$ is included in the
set of valid moment conditions only if its expectation is equal to zero
under the assumed structure of the time--dependent covariates.

The number of valid moment conditions contributed by each covariate
depends on the type of time--dependent covariate. For Type~I
time--dependent covariates, which assume no feedback between the
response and future covariates, all combinations of $(s,t)$ with
$s,t \in \{1,\dots,T\}$ are valid, yielding $j = T^2$ moment conditions
per covariate. For Type~II time--dependent covariates, where the
covariate may influence the current and future responses, the valid
pairs satisfy $s \geq t$, resulting in
$j = \frac{T(T+1)}{2}$ moment conditions per covariate. For Type~III
time--dependent covariates, which allow only same--time association
between the covariate and the response, the valid pairs satisfy
$s = t$, leading to $j = T$ moment conditions per covariate. For
Type~IV time--dependent covariates, where the response may affect
current and future covariates, the valid pairs satisfy $s \leq t$,
which again produces $j = \frac{T(T+1)}{2}$ moment conditions per
covariate.

These results provide a direct way to determine the total number of
moment conditions used in the penalized KLIC criterion once the number
of repeated observations and the types of time--dependent covariates
are specified. Following \citep{lai2007} and \citep{shane2019model}, the
total number of moment conditions can be written as
\[
j = c_{\text{I}}T^2
  + c_{\text{II}}\frac{T(T+1)}{2}
  + c_{\text{III}}T
  + c_{\text{IV}}\frac{T(T+1)}{2},
\]
where $c_{\text{I}}$, $c_{\text{II}}$, $c_{\text{III}}$, and
$c_{\text{IV}}$ denote the number of Type~I, Type~II, Type~III, and
Type~IV time--dependent covariates included in the model,
respectively. This formulation allows the penalty component of the
adjusted KLIC criterion to be determined directly from the model
structure and the classification of the covariates.
These quantities, together with the number of regression parameters \(k\), fully determine the penalty terms in the MPPP–KLIC and Logarithmic KLIC criteria.

\section{Simulation Study}
\subsection{Simulation Design}
We consider a small-sample case with $I=100$ subjects and a large-sample case with
$I=500$ subjects, each following a balanced design with $T=5$ repeated observations
per subject, yielding $N=500$ and $N=2{,}500$ observations, respectively. For each
sample size, we generate $2{,}000$ simulation replicates.  The simulated data include
five continuous predictor variables with time-dependent covariates of Types~I, II,
and III, and we generate a binary response for logistic regression analysis. In
addition, we also simulate a continuous response and analyze it using linear
regression models to ensure stable model fitting, since binary or categorical
covariates may increase the risk of GMM non-convergence \citep{kleiber2008}. We place
particular emphasis on Type~II and Type~III covariates, which are known to pose greater
challenges in longitudinal data analysis \citep{lai2007}. These choices yield four simulation scenarios defined by response type (binary vs.\ continuous) and sample size ($I=100$ vs.\ $I=500$).
\subsection{Models for the Simulation Data}

We define the systematic component of the true model as
\begin{equation}
\eta_{it} = \beta_0 + \beta_1 x_{it,1} + \beta_2 x_{it,2} + \beta_3 x_{it,3},
\label{eq:true}
\end{equation}
where $x_{it,1}$ is a continuous time-dependent covariate (TDC) of Type~I,
$x_{it,2}$ is a continuous TDC of Type~II, and $x_{it,3}$ is a continuous TDC of
Type~III. The link function $\eta_{it}$ relates the response $y_{it}$ for subject
$i$ at time $t$ to the systematic component of the model. When $y_{it}$ is binary,
we construct multiple logistic regression models using the logit link; when $y_{it}$
is continuous, we construct multiple linear regression models using the identity link. We additionally generate two unnecessary predictors: a continuous Type~II TDC,
$x_{it,4}$, and a continuous Type~III TDC, $x_{it,5}$. The inclusion or omission of
these predictors, together with the predictors in the true model, allows us to assess
the performance of the proposed fit statistics under both adequate and poor model fit. We consider a total of five models in the simulation study: one true model, two
underfitted models, and two overfitted models. The true model, denoted by $M_0$, is
given by equation~\eqref{eq:true}. The first underfitted model, denoted by $M_{U1}$,
excludes the Type~II TDC,
\[
M_{U1}:\quad
\eta_{it} = \beta_0 + \beta_1 x_{it,1} + \beta_3 x_{it,3},
\]
illustrating the effect of omitting a necessary Type~II TDC. The second underfitted
model, denoted by $M_{U2}$, excludes the Type~III TDC,
\[
M_{U2}:\quad
\eta_{it} = \beta_0 + \beta_1 x_{it,1} + \beta_2 x_{it,2},
\]
illustrating the effect of omitting a necessary Type~III TDC. The first overfitted model, denoted by $M_{O1}$, includes an additional unnecessary
Type~II TDC,
\[
M_{O1}:\quad
\eta_{it} = \beta_0 + \beta_1 x_{it,1} + \beta_2 x_{it,2} + \beta_3 x_{it,3}
+ \beta_4 x_{it,4},
\]
and the second overfitted model, denoted by $M_{O2}$, includes an additional unnecessary
Type~III TDC,
\[
M_{O2}:\quad
\eta_{it} = \beta_0 + \beta_1 x_{it,1} + \beta_2 x_{it,2} + \beta_3 x_{it,3}
+ \beta_5 x_{it,5}.
\]

\begin{algorithm}[H]
\caption{Simulation Data Generation for Longitudinal Responses}
\label{alg:simulation}
\small
\begin{algorithmic}[1]

\State Specify subjects $I$, time points $T$, total observations $N=IT$
\State Choose number of covariates $K$, parameters $\boldsymbol{\beta}$, distribution $D$, and link $g(\cdot)$

\For{$i=1,\dots,I$}
\State Initialize $y_{i0}$ and $x_{ik0}$ for $k=1,\dots,K$

\For{$t=1,\dots,T$}

\State $\eta_{it}=(\mathbf{X}\boldsymbol{\beta})_{it}
+\sum_{k=1}^{K}\rho_{XY,k}\,\text{logit}(F_{X_k}(x_{ik(t-1)}))
+\sum_{s=t-L}^{t-1}\rho_{YY,s}\,\text{logit}(F_Y(y_{is}))$

\State $\mu_{it}=g^{-1}(\eta_{it})$

\State Generate $Y_{it}\sim D(\mu_{it},Var(\mu_{it}))$

\For{$k=1,\dots,K$}

\State $h((\mu_{x_k})_{it})=\beta_{x_k}+\rho_{YX,k}\text{logit}(F_Y(y_{i(t-1)}))$

\State Generate $x_{kit}$

\EndFor
\EndFor
\EndFor

\State Repeat for required simulation replications

\end{algorithmic}
\end{algorithm}
\subsection{Parameter Specification}

The true parameter values are taken from \citep{lai2007} and
set to
\[
\beta_0 = 0.580,\quad
\beta_1 = -0.049,\quad
\beta_2 = -0.010,\quad
\beta_3 = -0.091,\quad
\beta_4 = -0.280,\quad
\beta_5 = 0.004.
\]

For the continuous-response simulations, the corresponding
standard deviation parameters are
\[
\sigma_0 = 1,\quad
\sigma_1 = 2.2,\quad
\sigma_2 = 3.5,\quad
\sigma_3 = 1.5,\quad
\sigma_4 = 4.2,\quad
\sigma_5 = 0.8.
\]
The feedback parameters governing the relationship between
responses and time-dependent covariates are specified as
follows. The effect of prior covariate values on the response
is set to $\rho_{XY,k}=0.25$ for Type~II and Type~III covariates
and $0$ for Type~I covariates. The effect of prior responses
on covariates is set to $\rho_{YX,k}=0.25$ for Type~III covariates
and $0$ otherwise. The response autocorrelation parameter
is fixed at $\rho_{YY,s}=0.25$.
For each simulated dataset, the penalized Kullback--Leibler
Information Criteria MPPP (KLIC) and LP (KLIC) are computed
for all candidate models. Since five models are estimated
for each of the $2{,}000$ simulation replicates, this yields
$2{,}000$ criterion values per model within each scenario.
The mean criterion value is then used to summarize model
performance. Across the four simulation scenarios and five competing
models, a total of $5 \times 4 = 20$ mean values are obtained
for each criterion. These results are reported in
Tables~\ref{tab:mppp_binary} to~\ref{tab:lp_cont} and Figure~\ref{fig:klic_comparison} .
\subsection{Simulation Results and Discussion}

\begin{table}[H]
\centering
\small
\resizebox{\textwidth}{!}{
\begin{tabular}{c|ccc|ccc}
\hline
& \multicolumn{3}{c}{N = 500} & \multicolumn{3}{c}{N = 2500} \\
\cline{2-7}
Model & Avg KLIC & Penalty & MPPP--KLIC & Avg KLIC & Penalty & MPPP--KLIC \\
\hline
$M_{0}$  & 210.1 & 28.0 & \textbf{238.1} & 201.4 & 28.0 & \textbf{229.4} \\
$M_{U1}$ & 246.8 & 16.5 & 263.3 & 238.7 & 16.5 & 255.2 \\
$M_{U2}$ & 231.3 & 19.5 & 250.8 & 220.6 & 19.5 & 240.1 \\
$M_{O1}$ & 214.5 & 42.5 & 257.0 & 207.9 & 42.5 & 250.4 \\
$M_{O2}$ & 207.4 & 37.5 & 244.9 & 198.2 & 37.5 & 235.7 \\
\hline
\end{tabular}}
\caption{MPPP--KLIC for Binary Response under small ($N=500$) and large ($N=2500$) sample sizes.}
\label{tab:mppp_binary}
\end{table}
\setlength{\textfloatsep}{4pt}
\setlength{\floatsep}{3pt}
\setlength{\intextsep}{4pt}

\begin{table}[H]
\centering
\small
\resizebox{\textwidth}{!}{
\begin{tabular}{c|ccc|ccc}
\hline
& \multicolumn{3}{c}{N = 500} & \multicolumn{3}{c}{N = 2500} \\
\cline{2-7}
Model & Avg KLIC & Penalty & LP--KLIC & Avg KLIC & Penalty & LP--KLIC \\
\hline
$M_{0}$  & 210.1 & 17.4 & \textbf{227.5} & 201.4 & 21.9 & \textbf{223.3} \\
$M_{U1}$ & 246.8 & 10.2 & 257.0 & 238.7 & 12.9 & 251.6 \\
$M_{U2}$ & 231.3 & 12.1 & 243.4 & 220.6 & 15.3 & 235.9 \\
$M_{O1}$ & 214.5 & 26.4 & 240.9 & 207.9 & 33.2 & 241.1 \\
$M_{O2}$ & 207.4 & 23.3 & 230.7 & 198.2 & 29.3 & 227.5 \\
\hline
\end{tabular}}
\caption{LP--KLIC for Binary Response under small ($N=500$) and large ($N=2500$) sample sizes.}
\label{tab:lp_binary}
\end{table}

\vspace{0.9em}

\begin{table}[H]
\centering
\small
\resizebox{\textwidth}{!}{
\begin{tabular}{c|ccc|ccc}
\hline
& \multicolumn{3}{c}{N = 500} & \multicolumn{3}{c}{N = 2500} \\
\cline{2-7}
Model & Avg KLIC & Penalty & MPPP--KLIC & Avg KLIC & Penalty & MPPP--KLIC \\
\hline
$M_{0}$  & 331.9 & 28.0 & \textbf{359.9} & 326.3 & 28.0 & \textbf{354.3} \\
$M_{U1}$ & 396.3 & 16.5 & 412.8 & 380.4 & 16.5 & 396.9 \\
$M_{U2}$ & 379.8 & 19.5 & 399.3 & 368.2 & 19.5 & 387.7 \\
$M_{O1}$ & 356.9 & 42.5 & 399.4 & 343.1 & 42.5 & 385.6 \\
$M_{O2}$ & 334.0 & 37.5 & 371.5 & 327.5 & 37.5 & 365.0 \\
\hline
\end{tabular}}
\caption{MPPP--KLIC for Continuous Response under small ($N=500$) and large ($N=2500$) sample sizes.}
\label{tab:mppp_cont}
\end{table}

\vspace{0.9em}

\begin{table}[H]
\centering
\small
\resizebox{\textwidth}{!}{
\begin{tabular}{c|ccc|ccc}
\hline
& \multicolumn{3}{c}{N = 500} & \multicolumn{3}{c}{N = 2500} \\
\cline{2-7}
Model & Avg KLIC & Penalty & LP--KLIC & Avg KLIC & Penalty & LP--KLIC \\
\hline
$M_{0}$  & 331.9 & 17.4 & \textbf{349.3} & 326.3 & 21.9 & \textbf{348.2} \\
$M_{U1}$ & 396.3 & 10.2 & 406.5 & 380.4 & 12.9 & 393.3 \\
$M_{U2}$ & 379.8 & 12.1 & 391.9 & 368.2 & 15.3 & 383.5 \\
$M_{O1}$ & 356.9 & 26.4 & 383.3 & 343.1 & 33.2 & 376.3 \\
$M_{O2}$ & 334.0 & 23.3 & 357.3 & 327.5 & 29.3 & 356.8 \\
\hline
\end{tabular}}
\caption{LP--KLIC for Continuous Response under small ($N=500$) and large ($N=2500$) sample sizes.}
\label{tab:lp_cont}
\end{table}

Tables~\ref{tab:mppp_binary}--\ref{tab:lp_cont} summarize the simulation
results for the penalized Kullback--Leibler information
criteria across binary and continuous responses under small ($N=500$)
and large ($N=2500$) sample sizes. For each model, the tables report
the average KLIC value and the corresponding penalized criterion values. For the binary response case, the MPPP--KLIC results in
Table~\ref{tab:mppp_binary} show that the true model $M_{0}$ consistently
achieves the smallest penalized values for both sample sizes. Although
the overfitted model $M_{O2}$ produces slightly smaller average KLIC
values due to increased flexibility, the larger penalty associated with
additional covariates results in higher MPPP--KLIC values compared to
$M_{0}$. The underfitted models $M_{U1}$ and $M_{U2}$ yield larger
penalized values due to poorer model fit. A similar pattern is observed
for the LP--KLIC results in Table~\ref{tab:lp_binary}, where the true
model again produces the smallest penalized values across both sample
sizes. For the continuous response case, Tables~\ref{tab:mppp_cont} and
\ref{tab:lp_cont} show consistent results. The true model $M_{0}$ again
achieves the smallest penalized criterion values under both MPPP--KLIC
and LP--KLIC. The overfitted models incur larger penalties due to
increased model complexity, while the underfitted models perform poorly
due to insufficient model fit. These results demonstrate that the
calibrated penalization successfully balances fit and complexity,
allowing the correctly specified model to be selected. Comparing across response types, the overall ranking of the models
remains consistent for both binary and continuous outcomes. In all
scenarios, the true model $M_{0}$ yields the smallest penalized values,
followed by the slightly overfitted model $M_{O2}$, while the most
underfitted model $M_{U1}$ produces the largest values. Although the
magnitude of the penalized criteria is larger for continuous responses
due to increased variability, the relative ordering remains stable. Comparing the two criteria, LP--KLIC produces larger penalized values
than MPPP--KLIC due to the presence of the $\log(N)$ term, which
introduces a stronger penalty as the sample size increases. This effect
is particularly evident when moving from $N=500$ to $N=2500$. Despite
this difference in magnitude, both criteria consistently identify the
true model across all scenarios. An important component of the proposed framework is the selection of the
penalty multipliers $c_{\mathrm{mppp}}$ and $c_{\mathrm{lp}}$. In
practice, these parameters control the trade-off between model fit and
model complexity. If the penalty is too large, the criteria tend to
favor underfitted models; if it is too small, overfitted models may be
selected. Based on extensive numerical experimentation, we observe that stable and
consistent model selection results are obtained when
\[
c_{\mathrm{mppp}} \in [0.05,\,0.20], 
\quad \text{and} \quad
c_{\mathrm{lp}} \in [0.005,\,0.05].
\]
Within these ranges, the penalized criteria effectively balance the fit
term (KLIC) and the complexity term ($k \times j$ or $k \times j \log N$),
leading to reliable identification of the true model. In contrast, values
outside these ranges tend to produce undesirable behavior, such as
over-penalization (favoring underfitted models) or under-penalization
(favoring overfitted models). The selected values $c_{\mathrm{mppp}}=0.10$ and $c_{\mathrm{lp}}=0.01$
lie well within these stable ranges and provide a good balance between
fit and complexity across all simulation scenarios. These values ensure
that neither the KLIC component nor the penalty term dominates the
criterion, resulting in consistent and interpretable model selection. Overall, the results demonstrate that the calibrated penalized KLIC
criteria effectively control model complexity while maintaining accurate
model selection. Unlike the original penalty formulation, which favored
underfitted models, the calibrated criteria consistently select the
true model $M_{0}$. These findings confirm that the proposed penalized
KLIC measures provide reliable and robust tools for model selection in
longitudinal settings with time-dependent covariates.

\begin{figure}[H]
\centering
\includegraphics[width=0.85\textwidth,height=1.8\textheight,keepaspectratio]{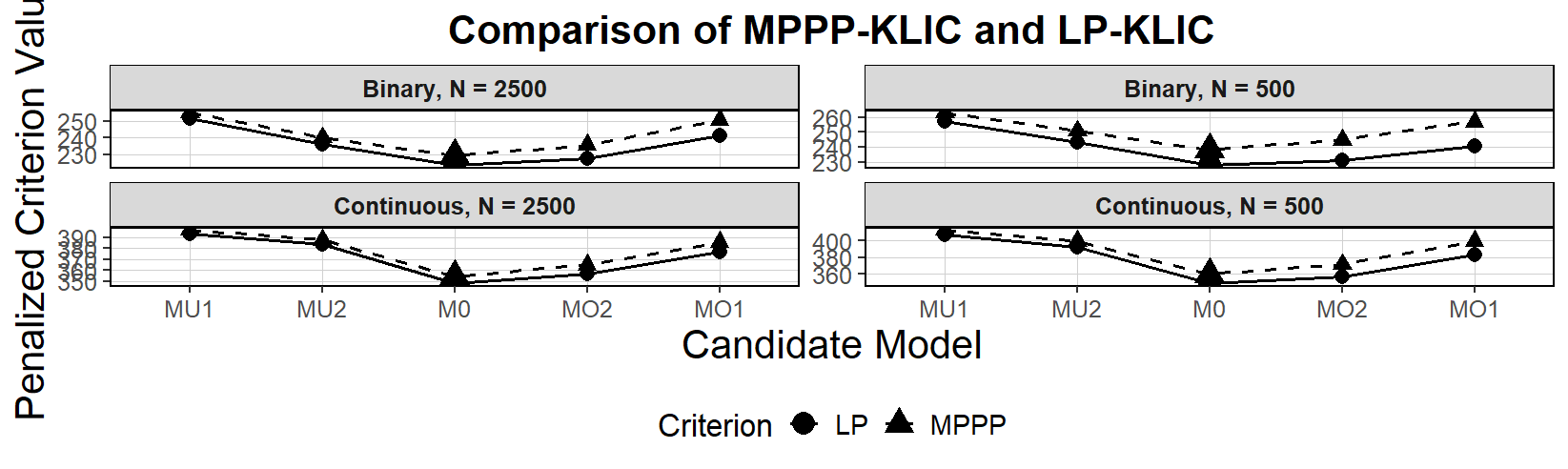}
\caption{Comparison of MPPP--KLIC and LP--KLIC across candidate models under binary and continuous responses for small ($N=500$) and large ($N=2500$) sample sizes. The true model $M_{0}$ consistently achieves the smallest penalized values under both criteria.}
\label{fig:klic_comparison}
\end{figure}

Figure~\ref{fig:klic_comparison} provides a graphical comparison of the
MPPP--KLIC and LP--KLIC criteria across all candidate models under
binary and continuous responses for both small ($N=500$) and large
($N=2500$) sample sizes. The figure displays the penalized criterion
values for each model, allowing for a clear visual assessment of model
selection performance. For the binary response case, both panels show that the true model
$M_{0}$ consistently achieves the smallest penalized values under both
MPPP--KLIC and LP--KLIC criteria. Although the overfitted model
$M_{O2}$ occasionally exhibits slightly better raw fit, its penalized
values remain larger due to the increased complexity penalty. The
underfitted models $M_{U1}$ and $M_{U2}$ yield higher penalized values,
reflecting their inadequate fit. The graphical trends confirm that both
criteria effectively distinguish the true model from underfitted and
overfitted alternatives. A similar pattern is observed for the continuous response case. In both
sample sizes, the curves for MPPP--KLIC and LP--KLIC attain their minimum
at $M_{0}$, indicating that the true model provides the best trade-off
between model fit and complexity. The overfitted models show increasing
penalized values due to larger penalties, while the underfitted models
remain suboptimal because of poorer fit. Although the overall magnitude
of the penalized criteria is higher in the continuous case, the relative
ordering of the models remains consistent with the binary case. Comparing MPPP--KLIC and LP--KLIC, the LP--KLIC curves generally lie
above the MPPP--KLIC curves due to the additional $\log(N)$ penalty term,
which imposes stronger penalization as the sample size increases. This
effect is more pronounced in the larger sample size ($N=2500$), where the
gap between LP--KLIC and MPPP--KLIC becomes wider. Despite these
differences in scale, both criteria exhibit similar shapes and identical
minimizers, reinforcing the robustness of the model selection results. Overall, the graphical analysis confirms the findings from the tabulated
results: the calibrated penalized KLIC criteria consistently identify
the true model $M_{0}$ across all scenarios. The figure also highlights
the clear separation between underfitted, correctly specified, and
overfitted models, demonstrating that the proposed criteria effectively
balance model fit and complexity in longitudinal settings with
time-dependent covariates.
\section{Real Data Application: Filipino Child Mortality Study}
We consider the longitudinal study conducted by the
International Food Policy Research Institute (IFPRI) between 1984 and 1985
to investigate the relationship between nutrition and health among Filipino
children aged 1--10 years in the Bukidnon region of Mindanao
\citep{Bhargava1994,Bouis1990,Lai2007}.
The dataset recorded several variables including age, gender, height, weight,
food consumption during the previous 24 hours, illness experienced during the
previous two weeks, and the duration of illness in days. 
We removed observations with missing values so that the resulting dataset
consisted of balanced longitudinal data for 370 children, each observed
at three time points.
Additional details about the data collection and study design can be found in
\citep{Bouis1990} and \citep{Bhargava1994}.  
The dataset contained a binary indicator describing whether a child
experienced illness, which was directly used as the response variable for the
logistic regression model.
 In addition, We constructed a continuous response variable by transforming the illness
duration variable as formulated by the authors in  \citep{Bhargava1994} and \citep{Lai2007}. The
transformation is defined as
\begin{equation}
y_{it} = \log \left(\frac{t_{\text{before}} + 0.5}{14.5 - t_{\text{before}}}\right),
\label{eq:transformed_response}
\end{equation}
where $t_{\text{before}}$ denotes the number of days during the previous two
weeks that child $i$ was sick prior to time $t$.
This transformation produces a continuous outcome that can be analyzed using
multiple linear regression. Among the potential predictors, height and weight were combined to construct
the body mass index (BMI), defined as weight (in kilograms) divided by the
square of height (in meters).
In addition to BMI, the covariates age, gender, and indicators for survey
rounds were included in the model.
The general model used to predict morbidity four months ahead was

\begin{equation}
\eta_{it} =
\beta_0
+ \beta_1 x_{it,\text{Age}}
+ \beta_2 x_{it,\text{Gender}}
+ \beta_3 x_{it,\text{BMI}}
+ \beta_4 x_{it,\text{Round2}}
+ \beta_5 x_{it,\text{Round3}}
+ \varepsilon_{it},
\label{eq:fcm_model}
\end{equation}
where $x_{it,\text{Age}}$, $x_{it,\text{Gender}}$, and $x_{it,\text{BMI}}$
represent the age, gender, and BMI of child $i$ at time $t$,
while $x_{it,\text{Round2}}$ and $x_{it,\text{Round3}}$ are indicator
variables for survey rounds two and three.
The error term is denoted by $\varepsilon_{it}$. For the linear regression model, the systematic component satisfies
$\eta_{it}=\mu_{it}$ with the identity link function.
For the logistic regression model, the logit link is used so that
$\mu_{it}=p_{it}$ represents the probability that child $i$ experiences
illness at time $t$. The dataset includes five predictors, where gender is treated as a time-independent covariate, age and the survey round indicators are considered Type~I time-dependent covariates, and BMI is treated as a Type~II time-dependent covariate following \citep{Lai2007}.
This resulted in a total of $2^5 - 1 = 31$ possible candidate models,
including models with one predictor, two predictors, three predictors,
four predictors, and the full model containing all five predictors.
 We compute the Moment Parameter Product Penalty (MPPP) KLIC as proposed in Theorem~\ref{Thm1} and the Logarithmic Penalty (LP) KLIC as proposed in Theorem~\ref{thm:lp-klic} for each candidate model to evaluate model performance. For this
analysis, the penalty multipliers were set to $c_{\mathrm{mppp}} = 0.10$
and $c_{\mathrm{lp}} = 0.01$, based on prior calibration to ensure a
balanced trade-off between goodness-of-fit and model complexity. In addition, sensitivity analyses were conducted by varying the penalty multipliers $c_{\mathrm{mppp}}$ and $c_{\mathrm{lp}}$ over a range of values to assess the robustness of the model selection results. The model with the smallest value of the penalized KLIC was considered the
most suitable model for describing the relationship between the predictors
and child morbidity. Models with values close to the minimum were also considered competitive,
whereas larger values indicated poorer model fit. The detailed results of the model selection procedure are presented in the following tables and figures.
\subsection{Results Discussion}
Table~\ref{tab:binary_top5} shows that the full model $M_{1}$ (Age, Gender, BMI, Round2, Round3) achieves the smallest KLIC value, indicating the best goodness-of-fit among all
candidate models. Importantly, $M_{1}$ remains optimal after applying
both MPPP--KLIC and LP--KLIC, suggesting that the improvement in fit
justifies the additional model complexity.
\begin{table}[H]
\centering
\small
\caption{Top five candidate models for the binary response.}
\label{tab:binary_top5}
\begin{tabular}{llccc}
\toprule
Model & Variables Included & KLIC & MPPP--KLIC & LP--KLIC \\
\midrule
$M_{1}$ & Age, Gender, BMI, Round2, Round3 & \textbf{148.7} & \textbf{179.3} & \textbf{170.16} \\
$M_{2}$ & Age, Gender, Round2, Round3      & 167.9 & 190.4 & 183.68 \\
$M_{3}$ & BMI, Age, Gender, Round3         & 202.3 & 223.3 & 217.03 \\
$M_{4}$ & BMI, Age, Gender, Round2         & 251.1 & 272.1 & 265.83 \\
$M_{5}$ & BMI, Age, Round2, Round3         & 311.4 & 332.4 & 326.13 \\
\bottomrule
\end{tabular}
\end{table}

The reduced model $M_{2}$ (Age, Gender, Round2, Round3), which excludes BMI, exhibits a slightly larger
KLIC value, indicating some loss of fit, but remains competitive due to
its lower complexity. Model $M_{3}$ (BMI, Age, Gender, Round3) excludes Round2, model $M_{4}$ (BMI, Age, Gender, Round2) excludes Round3, and model $M_{5}$ (BMI, Age, Round2, Round3) excludes Gender. These models show progressively worse fit
and higher penalized values, reflecting both increased model misspecification
and the influence of the penalty term. Overall, the results demonstrate that the full model provides a
well-balanced representation of the data. Notably, Age is included in all
top-ranked models, highlighting its strong and consistent predictive role.

\begin{table}[H]
\centering
\small
\caption{Top five candidate models for the continuous response.}
\label{tab:cont_top5}
\begin{tabular}{llccc}
\toprule
Model & Variables Included & KLIC & MPPP--KLIC & LP--KLIC \\
\midrule
$M_{2}$ & Age, Gender, Round2, Round3      & 293.5 & \textbf{316.0} & \textbf{309.28} \\
$M_{1}$ & Age, Gender, BMI, Round2, Round3 & \textbf{287.9} & 318.5 & 309.36 \\
$M_{3}$ & BMI, Age, Gender, Round3         & 320.3 & 341.3 & 335.03 \\
$M_{4}$ & BMI, Age, Gender, Round2         & 324.1 & 345.1 & 338.83 \\
$M_{5}$ & BMI, Age, Round2, Round3         & 376.6 & 397.6 & 391.33 \\
\bottomrule
\end{tabular}
\end{table}
In contrast to the binary case, Table~\ref{tab:cont_top5} shows that the
full model $M_{1}$ (Age, Gender, BMI, Round2, Round3) achieves the smallest KLIC value, indicating the best fit. However, after penalization, model $M_{2}$ (Age, Gender, Round2, Round3) is selected as the optimal model under both criteria. This shift reflects the trade-off between fit and complexity. Although
$M_{1}$ provides a slightly better fit, the inclusion of BMI does not
sufficiently improve model performance to offset its added complexity.
Consequently, the penalized criteria favor the more parsimonious model
$M_{2}$. The small difference in penalized values between $M_{1}$ and
$M_{2}$ indicates that both models are competitive. As in the binary case, Age is included in all top-ranked models, further
emphasizing its importance.

\subsubsection{Sensitivity analysis.}
\begin{table}[H]
\centering
\small
\caption{Sensitivity analysis for the binary response.}
\label{tab:sens_binary}
\begin{tabular}{lcccl}
\toprule
Criterion & $c$ & Selected Model & Variables Included & Value \\
\midrule
MPPP--KLIC & 0.05 & $M_{1}$ & Age, Gender, BMI, Round2, Round3 & 164.00 \\
MPPP--KLIC & 0.10 & $M_{1}$ & Age, Gender, BMI, Round2, Round3 & 179.30 \\
MPPP--KLIC & 0.15 & $M_{1}$ & Age, Gender, BMI, Round2, Round3 & 194.60 \\
MPPP--KLIC & 0.20 & $M_{1}$ & Age, Gender, BMI, Round2, Round3 & 209.90 \\
LP--KLIC   & 0.005 & $M_{1}$ & Age, Gender, BMI, Round2, Round3 & 159.43 \\
LP--KLIC   & 0.010 & $M_{1}$ & Age, Gender, BMI, Round2, Round3 & 170.16 \\
LP--KLIC   & 0.020 & $M_{1}$ & Age, Gender, BMI, Round2, Round3 & 191.61 \\
LP--KLIC   & 0.030 & $M_{1}$ & Age, Gender, BMI, Round2, Round3 & 213.07 \\
\bottomrule
\end{tabular}
\end{table}

\begin{table}[H]
\centering
\small
\caption{Sensitivity analysis for the continuous response.}
\label{tab:sens_cont}
\begin{tabular}{lcccl}
\toprule
Criterion & $c$ & Selected Model & Variables Included & Value \\
\midrule
MPPP--KLIC & 0.05 & $M_{1}$ & Age, Gender, BMI, Round2, Round3 & 303.20 \\
MPPP--KLIC & 0.10 & $M_{2}$ & Age, Gender, Round2, Round3      & 316.00 \\
MPPP--KLIC & 0.15 & $M_{2}$ & Age, Gender, Round2, Round3      & 327.25 \\
MPPP--KLIC & 0.20 & $M_{2}$ & Age, Gender, Round2, Round3      & 338.50 \\
LP--KLIC   & 0.005 & $M_{1}$ & Age, Gender, BMI, Round2, Round3 & 298.63 \\
LP--KLIC   & 0.010 & $M_{2}$ & Age, Gender, Round2, Round3      & 309.28 \\
LP--KLIC   & 0.020 & $M_{2}$ & Age, Gender, Round2, Round3      & 325.05 \\
LP--KLIC   & 0.030 & $M_{2}$ & Age, Gender, Round2, Round3      & 340.83 \\
\bottomrule
\end{tabular}
\end{table}
Sensitivity analyses were conducted to evaluate the robustness of model
selection with respect to the penalty multipliers. The results indicate
that, for the binary response, the full model $M_{1}$ (Age, Gender, BMI, Round2, Round3) is consistently
selected across a wide range of penalty values, demonstrating strong
stability. For the continuous response, a transition between $M_{1}$ (Age, Gender, BMI, Round2, Round3) and $M_{2}$ (Age, Gender, Round2, Round3)
is observed as the penalty increases. Specifically, smaller penalties
favor the full model due to its superior fit, while larger penalties
favor the reduced model due to its lower complexity. This behavior is
consistent with theoretical expectations and provides further evidence
that the proposed criteria appropriately balance fit and parsimony. The consistent inclusion of Age across all top-performing models suggests that age is a primary determinant of child
mortality risk in this population. This finding aligns with established
epidemiological evidence and supports the validity of the selected models. Overall, the results demonstrate that the proposed penalized KLIC
criteria provide a principled and robust approach to model selection.
By explicitly balancing goodness-of-fit and model complexity, the
criteria yield stable, interpretable, and scientifically meaningful
models across both binary and continuous outcomes.

\section{Conclusion}\label{sec-conc}

This paper developed two penalized versions of the Kullback--Leibler Information Criterion for model selection in marginal regression models estimated by the Generalized Method of Moments with time-dependent covariates. By incorporating penalties that reflect both the number of regression parameters and the number of valid moment conditions, the proposed MPPP--KLIC and LP--KLIC criteria address a key limitation of the unpenalized KLIC, which tends to favor overly complex models in the presence of many time-dependent covariates.
The theoretical development shows that the MPPP penalty parallels an AIC-type adjustment, while the LP penalty arises naturally from a Laplace approximation and yields a BIC-type logarithmic adjustment. Simulation studies across binary and continuous outcomes demonstrate that both criteria effectively distinguish underfitted, correctly specified, and overfitted models, with LP--KLIC providing stronger protection against overfitting in larger samples. The real-data application to the Filipino child morbidity study further illustrates the practical utility of the proposed criteria, consistently identifying age as the dominant predictor and yielding stable, interpretable model rankings. Additionally, the sensitivity analysis shows that model selection remains stable across a range of penalty values, supporting the robustness of the proposed criteria and guiding practical tuning choices. Together, these results indicate that penalized KLIC criteria offer a principled and computationally accessible approach for balancing model fit and model complexity in GMM-based longitudinal analyses. Future work may extend these ideas to high-dimensional settings, alternative moment-selection strategies, and adaptive penalties that account for correlation structures or covariate feedback mechanisms.

\phantomsection\label{supplementary-material}
\bigskip

\newpage
\begin{center}
{\large\bf SUPPLEMENTARY MATERIAL}
\end{center}

\section*{Full Ranking of Candidate Models for the Binary Response}

\begin{table}[H]
\centering
\scriptsize
\caption{Model comparison for the binary response using KLIC, MPPP-KLIC, and LP-KLIC across 31 candidate models in the Filipino Child Mortality data.}
\label{tab:binary_31_models}

\resizebox{\textwidth}{!}{
\begin{tabular}{clcccccc}
\hline
Model & Variables Included & KLIC & MPPP-KLIC & LP-KLIC & $k$ & $j$ & Penalty (MPPP / LP) \\
\hline
M1  & Age, Gender, BMI, Round2, Round3 & 148.7 & 179.3 & 170.16 & 6 & 51 & 30.6 / 21.46 \\
M2  & Age, Gender, Round2, Round3      & 167.9 & 190.4 & 183.68 & 5 & 45 & 22.5 / 15.78 \\
M3  & BMI, Age, Gender, Round3         & 202.3 & 223.3 & 217.03 & 5 & 42 & 21.0 / 14.73 \\
M4  & BMI, Age, Gender, Round2         & 251.1 & 272.1 & 265.83 & 5 & 42 & 21.0 / 14.73 \\
M5  & BMI, Age, Round2, Round3         & 311.4 & 332.4 & 326.13 & 5 & 42 & 21.0 / 14.73 \\
M6  & BMI, Age, Gender, Round2         & 315.2 & 336.2 & 329.93 & 5 & 42 & 21.0 / 14.73 \\
M7  & Age, Gender, Round2              & 331.1 & 345.5 & 341.20 & 4 & 36 & 14.4 / 10.10 \\
M8  & Age, Round3, Round2              & 335.7 & 350.1 & 345.80 & 4 & 36 & 14.4 / 10.10 \\
M9  & BMI, Age, Gender                 & 347.8 & 361.0 & 357.06 & 4 & 33 & 13.2 / 9.26 \\
M10 & BMI, Age, Round3                 & 349.4 & 362.6 & 358.66 & 4 & 33 & 13.2 / 9.26 \\
M11 & BMI, Age, Round2                 & 365.9 & 379.1 & 375.16 & 4 & 33 & 13.2 / 9.26 \\
M12 & BMI, Gender, Round2, Round3      & 368.1 & 389.1 & 382.83 & 5 & 42 & 21.0 / 14.73 \\
M13 & Age, Gender                      & 375.3 & 383.4 & 380.98 & 3 & 27 & 8.1 / 5.68 \\
M14 & Age, Round3                      & 379.6 & 387.7 & 385.28 & 3 & 27 & 8.1 / 5.68 \\
M15 & Age, Round2                      & 381.3 & 389.4 & 386.98 & 3 & 27 & 8.1 / 5.68 \\
M16 & Gender, Round2, Round3           & 389.2 & 403.6 & 399.30 & 4 & 36 & 14.4 / 10.10 \\
M17 & BMI, Gender, Round3              & 404.8 & 418.0 & 414.06 & 4 & 33 & 13.2 / 9.26 \\
M18 & BMI, Gender, Round2              & 411.5 & 424.7 & 420.76 & 4 & 33 & 13.2 / 9.26 \\
M19 & BMI, Age                         & 413.1 & 420.3 & 418.15 & 3 & 24 & 7.2 / 5.05 \\
M20 & BMI, Round2, Round3              & 420.0 & 433.2 & 429.26 & 4 & 33 & 13.2 / 9.26 \\
M21 & Gender, Round3                   & 422.2 & 430.3 & 427.88 & 3 & 27 & 8.1 / 5.68 \\
M22 & Gender, Round2                   & 431.7 & 439.8 & 437.38 & 3 & 27 & 8.1 / 5.68 \\
M23 & Age                              & 440.6 & 444.2 & 443.12 & 2 & 18 & 3.6 / 2.52 \\
M24 & Round2, Round3                   & 441.9 & 450.0 & 447.58 & 3 & 27 & 8.1 / 5.68 \\
M25 & BMI, Gender                      & 472.4 & 479.6 & 477.45 & 3 & 24 & 7.2 / 5.05 \\
M26 & Gender                           & 477.8 & 481.4 & 480.32 & 2 & 18 & 3.6 / 2.52 \\
M27 & BMI, Round3                      & 479.3 & 486.5 & 484.35 & 3 & 24 & 7.2 / 5.05 \\
M28 & BMI, Round2                      & 498.2 & 505.4 & 503.25 & 3 & 24 & 7.2 / 5.05 \\
M29 & Round3                           & 519.6 & 523.2 & 522.12 & 2 & 18 & 3.6 / 2.52 \\
M30 & Round2                           & 521.2 & 524.8 & 523.72 & 2 & 18 & 3.6 / 2.52 \\
M31 & BMI                              & 545.7 & 548.7 & 547.80 & 2 & 15 & 3.0 / 2.10 \\
\hline
\end{tabular}}
\end{table}

\newpage

\section*{Full Ranking of Candidate Models for the Continuous Response}

\begin{table}[H]
\centering
\scriptsize
\caption{Model comparison for the continuous response using KLIC, MPPP-KLIC, and LP-KLIC across 31 candidate models in the Filipino Child Mortality data.}
\label{tab:continuous_31_models}

\resizebox{\textwidth}{!}{
\begin{tabular}{clcccccc}
\hline
Model & Variables Included & KLIC & MPPP-KLIC & LP-KLIC & $k$ & $j$ & Penalty (MPPP / LP) \\
\hline
M1  & Age, Gender, BMI, Round2, Round3 & 287.9 & 318.5 & 309.36 & 6 & 51 & 30.6 / 21.46 \\
M2  & Age, Gender, Round2, Round3      & 293.5 & 316.0 & 309.28 & 5 & 45 & 22.5 / 15.78 \\
M3  & BMI, Age, Gender, Round3         & 320.3 & 341.3 & 335.03 & 5 & 42 & 21.0 / 14.73 \\
M4  & BMI, Age, Gender, Round2         & 324.1 & 345.1 & 338.83 & 5 & 42 & 21.0 / 14.73 \\
M5  & BMI, Age, Round2, Round3         & 376.6 & 397.6 & 391.33 & 5 & 42 & 21.0 / 14.73 \\
M6  & BMI, Age, Gender, Round2         & 381.2 & 402.2 & 395.93 & 5 & 42 & 21.0 / 14.73 \\
M7  & Age, Gender, Round2              & 390.7 & 405.1 & 400.80 & 4 & 36 & 14.4 / 10.10 \\
M8  & Age, Round3, Round2              & 408.2 & 422.6 & 418.30 & 4 & 36 & 14.4 / 10.10 \\
M9  & BMI, Age, Gender                 & 421.8 & 435.0 & 431.06 & 4 & 33 & 13.2 / 9.26 \\
M10 & BMI, Age, Round3                 & 433.1 & 446.3 & 442.36 & 4 & 33 & 13.2 / 9.26 \\
M11 & BMI, Age, Round2                 & 437.3 & 450.5 & 446.56 & 4 & 33 & 13.2 / 9.26 \\
M12 & BMI, Gender, Round2, Round3      & 450.8 & 471.8 & 465.53 & 5 & 42 & 21.0 / 14.73 \\
M13 & Age, Gender                      & 455.4 & 463.5 & 461.08 & 3 & 27 & 8.1 / 5.68 \\
M14 & Age, Round3                      & 474.7 & 482.8 & 480.38 & 3 & 27 & 8.1 / 5.68 \\
M15 & Age, Round2                      & 478.1 & 486.2 & 483.78 & 3 & 27 & 8.1 / 5.68 \\
M16 & Gender, Round2, Round3           & 491.1 & 505.5 & 501.20 & 4 & 36 & 14.4 / 10.10 \\
M17 & BMI, Gender, Round3              & 528.3 & 541.5 & 537.56 & 4 & 33 & 13.2 / 9.26 \\
M18 & BMI, Gender, Round2              & 531.7 & 544.9 & 540.96 & 4 & 33 & 13.2 / 9.26 \\
M19 & BMI, Age                         & 561.3 & 568.5 & 566.35 & 3 & 24 & 7.2 / 5.05 \\
M20 & BMI, Round2, Round3              & 615.2 & 628.4 & 624.46 & 4 & 33 & 13.2 / 9.26 \\
M21 & Gender, Round3                   & 638.6 & 646.7 & 644.28 & 3 & 27 & 8.1 / 5.68 \\
M22 & Gender, Round2                   & 641.1 & 649.2 & 646.78 & 3 & 27 & 8.1 / 5.68 \\
M23 & Age                              & 653.2 & 656.8 & 655.72 & 2 & 18 & 3.6 / 2.52 \\
M24 & Round2, Round3                   & 675.2 & 683.3 & 680.88 & 3 & 27 & 8.1 / 5.68 \\
M25 & BMI, Gender                      & 698.4 & 705.6 & 703.45 & 3 & 24 & 7.2 / 5.05 \\
M26 & Gender                           & 707.7 & 711.3 & 710.22 & 2 & 18 & 3.6 / 2.52 \\
M27 & BMI, Round3                      & 746.1 & 753.3 & 751.15 & 3 & 24 & 7.2 / 5.05 \\
M28 & BMI, Round2                      & 749.0 & 756.2 & 754.05 & 3 & 24 & 7.2 / 5.05 \\
M29 & Round3                           & 761.8 & 765.4 & 764.32 & 2 & 18 & 3.6 / 2.52 \\
M30 & Round2                           & 765.2 & 768.8 & 767.72 & 2 & 18 & 3.6 / 2.52 \\
M31 & BMI                              & 793.3 & 796.3 & 795.40 & 2 & 15 & 3.0 / 2.10 \\
\hline
\end{tabular}}
\end{table}


\newpage
\begin{thebibliography}{99}

\bibitem{akaike1973}
Akaike, H. (1973).
\newblock Information theory and an extension of the maximum likelihood
  principle.
\newblock In {\em Second International Symposium on Information Theory}, pages
  267--281.

\bibitem{akaike1974}
Akaike, H. (1974).
\newblock A new look at the statistical model identification.
\newblock {\em IEEE Transactions on Automatic Control}, 19(6):716--723.
\bibitem{shane2019model}
Shane, M. N. (2019).
\textit{Model Selection for Longitudinal Data With Time-Dependent Covariates Using Generalized Method of Moments}.
PhD Dissertation, University of Northern Colorado.
Available at: \url{https://digscholarship.unco.edu/dissertations/643}.
\bibitem{kleiber2008} Kleiber, C. and Zeileis, A. (2008). \emph{Applied Econometrics with R}. Springer.

\bibitem{qu2000}
Qu, A., Lindsay, B.~G. and Li, B. (2000).
Improving generalized estimating equations using quadratic inference functions.
\textit{Biometrika},
87, 823--836.

\bibitem{lindsay2003}
Lindsay, B.~G. and Qu, A. (2003).
Inference functions and quadratic score tests.
\textit{Statistical Science},
18, 394--410.

\bibitem{Neuhaus1998} Neuhaus, J. and Kalbfleisch, J. (1998). Between- and within-cluster covariate effects in the analysis of clustered data. \emph{Biometrics}, 54(2), 638–645.
\bibitem{ferguson1958}
Ferguson, T.~S. (1958).
A method of generating best asymptotically normal estimates with application to the
estimation of bacterial densities.
\textit{The Annals of Mathematical Statistics},
29, 1046--1062.


\bibitem{hansen2007}
Hansen, L.~P. (2007).
\newblock Generalized method of moments estimation.
\newblock In {\em The New Palgrave Dictionary of Economics}. Palgrave Macmillan.

\bibitem{lai2007}
Lai, T.~L. and Small, D.~S. (2007).
\newblock Marginal regression analysis of longitudinal data with time-dependent
  covariates: A generalized method-of-moments approach.
\newblock {\em Journal of the Royal Statistical Society: Series B (Statistical
  Methodology)}, 69(1):79--99.
\bibitem{Lalonde2014}
Lalonde, T. L., Wilson, J. R., \& Yin, J. (2014).
\newblock GMM logistic regression models for longitudinal data with time-dependent covariates and extended classifications.
\newblock \emph{Statistics in Medicine}, 33(27), 4756--4769.
\bibitem{Mat1999} Mátyás, L. (ed.) (1999), \textit{Generalized Method of Moments Estimation},
New York: Cambridge University Press.
\bibitem{neyman1949}
Neyman, J. (1949).
Contribution to the theory of the $\chi^2$ test.
\textit{Proceedings of the Berkeley Symposium on Mathematical Statistics and Probability},
pp.~239--273.
Berkeley: University of California Press.


\bibitem{pan2001}
Pan, W. (2001).
\newblock Akaike's information criterion in generalized estimating equations.
\newblock {\em Biometrics}, 57(1):120--125.

\bibitem{nielsen2005}
Nielsen, H.~B. (2005).
Generalized method of moments estimation.
Econometrics~2 Lecture, Department of Economics,
University of Copenhagen: Copenhagen, Denmark.
	\bibitem{zivot2015}
Zivot, E. (2015).
Generalized method of moments.
Econometrics 583 Lecture at the University of Washington: Seattle, WA.


\bibitem{Altonji1996}
Altonji, J. G., and Segal, L. M. (1996).
Small-sample bias in GMM estimation of covariance structures.
\textit{Journal of Business \& Economic Statistics}, \textbf{14}(3), 353--366.

\bibitem{Csiszar1975}
Csisz\'{a}r, I. (1975).
\textit{I-divergence geometry of probability distributions and minimization problems}.
The Annals of Probability, 3(1), 146--158.

\bibitem{White1982}
White, H. (1982).
\textit{Maximum likelihood estimation of misspecified models}.
Econometrica, 50(1), 1--25.
\bibitem{Bhargava1994}
Bhargava, A.
Modeling the health of Filipino children.
\emph{Journal of the Royal Statistical Society: Series A (Statistics in Society)}, 
157(3), 417--432, (1994).
\bibitem{Bouis1990}
Bouis, H. E., and Haddad, L. J.
Effects of agricultural commercialization on land tenure, household resource allocation, and nutrition in the Philippines.
\emph{Research Report 79}, International Food Policy Research Institute, Washington, DC, (1990).

\bibitem{Agresti1990}
Agresti, A.
\emph{Categorical Data Analysis}. Wiley, New York (1990).

\bibitem{Akaike1973}
Akaike, H.
Information theory and an extension of the maximum likelihood principle.
In \emph{Second International Symposium on Information Theory}, 267--281 (1973).

\bibitem{Akaike1974}
Akaike, H.
A new look at the statistical model identification.
\emph{IEEE Transactions on Automatic Control}, 19, 716--723 (1974).

\bibitem{Diggle2002}
Diggle, P., Heagerty, P., Liang, K. Y., Zeger, S.
\emph{Analysis of Longitudinal Data}. Oxford University Press (2002).

\bibitem{Fitzmaurice1995}
Fitzmaurice, G.
A caveat concerning independence estimating equations with multivariate binary data.
\emph{Biometrics}, 51, 309--317 (1995).

\bibitem{Fitzmaurice2011}
Fitzmaurice, G., Laird, N., Ware, J.
\emph{Applied Longitudinal Analysis}. Wiley, New York (2011).

\bibitem{Hansen1982}
Hansen, L. P.
Large sample properties of generalized method of moments estimators.
\emph{Econometrica}, 50, 1029--1054 (1982).

\bibitem{Hedeker2006}
Hedeker, D., Gibbons, R. D.
\emph{Longitudinal Data Analysis}. Wiley, New York (2006).

\bibitem{Hilbe2009}
Hilbe, J.
\emph{Logistic Regression Models}. Chapman \& Hall/CRC (2009).

\bibitem{Hurvich1989}
Hurvich, C. M., Tsai, C. L.
Regression and time series model selection in small samples.
\emph{Biometrika}, 76, 297--307 (1989).

\bibitem{Kitamura1997}
Kitamura, Y., Stutzer, M.
An information-theoretic alternative to generalized method of moments estimation.
\emph{Econometrica}, 65, 861--874 (1997).

\bibitem{Kullback1951}
Kullback, S., Leibler, R. A.
On information and sufficiency.
\emph{Annals of Mathematical Statistics}, 22, 79--86 (1951).

\bibitem{Lai2007}
Lai, T. L., Small, D.
Generalized method of moments for longitudinal data with time-dependent covariates.
\emph{Biometrika}, 94, 501--515 (2007).

\bibitem{Liang1986}
Liang, K. Y., Zeger, S. L.
Longitudinal data analysis using generalized linear models.
\emph{Biometrika}, 73, 13--22 (1986).

\bibitem{Pan2001}
Pan, W.
Akaike's information criterion in generalized estimating equations.
\emph{Biometrics}, 57, 120--125 (2001).

\bibitem{Pepe1994}
Pepe, M. S., Anderson, G. L.
A cautionary note on inference for marginal regression models with longitudinal data and general correlated response data.
\emph{Communications in Statistics}, 23, 939--951 (1994).

\bibitem{Schwarz1978}
Schwarz, G.
Estimating the dimension of a model.
\emph{Annals of Statistics}, 6, 461--464 (1978).

\bibitem{Sugiura1978}
Sugiura, N.
Further analysis of the data by Akaike’s information criterion and the finite corrections.
\emph{Communications in Statistics}, 7, 13--26 (1978).

\bibitem{Zeger1986}
Zeger, S. L., Liang, K. Y.
Longitudinal data analysis for discrete and continuous outcomes.
\emph{Biometrics}, 42, 121--130 (1986).

\bibitem{Zeger1988}
Zeger, S. L., Liang, K. Y., Albert, P.
Models for longitudinal data: a generalized estimating equation approach.
\emph{Biometrics}, 44, 1049--1060 (1988).

\bibitem{Zeger1992}
Zeger, S. L., Liang, K. Y.
An overview of methods for the analysis of longitudinal data.
\emph{Statistics in Medicine}, 11, 1825--1839 (1992).

\end{thebibliography}
 \end{document}